 \newcommand{\bs}{\bigskip}
 \newcommand{\n}{\noindent}
 \newcommand{\s}{\smallskip}
 \newcommand{\hs}[1]{\hspace*{ #1 mm}}
 \newcommand{\vs}[1]{\vspace*{ #1 mm}}
 \newcommand{\setempty}{\varnothing}
 \newcommand{\nat}{\mathbb{N}}
 \newcommand{\integer}{\mathbb{Z}}
 \newcommand{\co}{\mathrm{co}\mbox{-}}
 \newcommand{\CC}{{\cal C}}
 \newcommand{\FF}{{\cal F}}
 \newcommand{\LL}{{\cal L}}
 \newcommand{\MM}{{\cal M}}
 \newcommand{\PP}{{\cal P}}
 \newcommand{\dl}{\mathrm{L}}
 \newcommand{\nl}{\mathrm{NL}}
 \newcommand{\p}{\mathrm{P}}
 \newcommand{\np}{\mathrm{NP}}
 \newcommand{\poly}{\mathrm{poly}}
 \newcommand{\fl}{\mathrm{FL}}
 \newcommand{\cfl}{\mathrm{CFL}}
 \newcommand{\dcfl}{\mathrm{DCFL}}
\theoremstyle{plain}
 \newtheorem{theorem}{Theorem}[section]
 \newtheorem{lemma}[theorem]{Lemma}
 \newtheorem{proposition}[theorem]{{\bf Proposition}}
  \newtheorem{definition}[theorem]{Definition}}
\newtheorem{example}[theorem]{Example}}
 \newenvironment{proofof}[1]{\vspace*{5mm} \par \noindent
         {\bf Proof of #1.\hs{2}}}{\hfill$\Box$ \vspace*{3mm}}
 \newenvironment{proof}{\par \noindent
            {\bf Proof. \hs{2}}}{\hfill$\Box$ \vspace*{3mm}}
 \newenvironment{yproof}{\par \noindent
            {\bf Proof. \hs{2}}}{\hfill$\Box$ \vspace*{3mm}}
 \newcommand{\ceilings}[1]{\lceil #1 \rceil}
\newcommand{\ignore}[1]{}
\newcommand{\track}[2]{[\:\begin{subarray}{c} #1 \\%
      #2 \end{subarray} ]}
 \newcommand{\oned}{1\mathrm{D}}
 \newcommand{\onen}{1\mathrm{N}}
 \newcommand{\twod}{2\mathrm{D}}
 \newcommand{\twon}{2\mathrm{N}}
 \newcommand{\para}{\mathrm{para}\mbox{-}}
 \newcommand{\phsp}{\mathrm{PHSP}}
 \newcommand{\onedpd}{\mathrm{1DPD}}
 \newcommand{\onenpd}{\mathrm{1NPD}}
 \newcommand{\twodpd}{\mathrm{2DPD}}
 \newcommand{\twonpd}{\mathrm{2NPD}}
 \newcommand{\logdcfl}{\mathrm{LOGDCFL}}
 \newcommand{\logcfl}{\mathrm{LOGCFL}}
 \newcommand{\phps}{\mathrm{PHPS}}
\begin{document}

\pagestyle{plain}
\setcounter{page}{1}


\begin{center}
{\Large {\bf Parameterizations of Logarithmic-Space Reductions, Stack-State Complexity of Nonuniform Families of Pushdown Automata, and a Road to the  LOGCFL$\subseteq$LOGDCFL/poly Question}} \bs\\
{\sc Tomoyuki Yamakami}\footnote{Present Affiliation: Faculty of Engineering, University of Fukui, 3-9-1 Bunkyo, Fukui 910-8507, Japan} \bs\\
\end{center}


\begin{abstract}
The complexity class $\logcfl$ (resp., $\logdcfl$) consists of all languages that are many-one reducible to context-free (resp., deterministic context-free) languages using logarithmic space. These complexity classes have been studied over five decades in connection to parallel computation since they are located between Nick's classes $\mathrm{NC}^1$ and $\mathrm{NC}^2$.
In contrast, the state complexity of nonuniform finite-automaton families was first discussed in the 1970s and it has been extensively explored lately for various finite-automata families. We extend this old subject to the stack-state complexity (i.e., the total number of inner states plus simultaneously pushable stack symbol series) of nonuniform families of various pushdown automata.
We introduce reasonable  ``parameterizations'' of LOGCFL and LOGDCFL and apply them as a technical tool to establish a close connection between the $\logcfl \subseteq \logdcfl/ \poly$ question and the polynomial  stack-state complexity of nonuniform families of two-way pushdown automata. We also discuss the precise computational complexity of polynomial-size one-way pushdown automata.

\vs{2}
\n{\bf Key words.} parameterized decision problem, promise problem, LOGCFL, LOGDCFL, logarithmic-space reduction, stack-state complexity, pushdown automata, polynomial-size advice
\end{abstract}


\sloppy
\section{Background and an Overview}\label{sec:introduction}

Let us quickly review necessary background materials and overview the main contribution of this work.

\subsection{Nonuniform State Complexity Classes}

We start with looking into the parallel complexity classes $\logdcfl $ and $\logcfl $, which  are the collections of all languages that are \emph{logarithmic-space many-one reducible} (or $\dl$-m-reducible) to appropriately-chosen deterministic context-free languages and context-free languages, respectively.
The class $\logcfl $ has been extensively studied since its first appearance in 1971 by Cook \cite{Coo71}. It is well-known that $\mathrm{NC}^{1}\subseteq \dl\subseteq \logdcfl \subseteq \logcfl\subseteq \mathrm{NC}^{2}$, where $\dl$ is the logarithmic-space complexity class and $\mathrm{NC}^{k}$ indicates the $k$th Nick's class.
Founded on the arguments in \cite{Gal77,Har72}, Sudborough \cite{Sud78} characterized $\logcfl $ as well as $\logdcfl $ in terms of two different machine models \emph{with no use} of $\dl$-m-reductions. One of these models is Cook's auxiliary pushdown automaton model \cite{Coo71}. We remark that, if a working hypothesis known as the \emph{linear space hypothesis}\footnote{The \emph{linear space hypothesis} (LSH) states that, for any constant $\varepsilon\in[0,1)$, a special  $\nl$-complete problem, called $\mathrm{2SAT}_3$, cannot be deterministically solved in polynomial time using $O(n^{\varepsilon})$ space \cite{Yam17a}.} \cite{Yam17a,Yam17b} is true, then $\logcfl$ is different from $\logdcfl$.
By further supplementing Karp-Lipton style advice of polynomial length to underlying $\dl$-m-reduction functions, we obtain  \emph{advised-$\dl$-m-reductions}. These advised reductions naturally induce $\logdcfl /\poly$ and $\logcfl /\poly$ respectively from $\logdcfl $ and $\logcfl $.  In Section \ref{sec:define-LOGCFL}, we will discuss two different characterizations of those advised complexity classes
$\logcfl/\poly$ and $\logdcfl/\poly$.
It is not clear at present that $\logcfl$ is included in $\logdcfl/\poly$.

Toward the $\logcfl \subseteq \logdcfl/ \poly$ question, in this work, we wish to ``parameterize'' $\logcfl/\poly$ and $\logdcfl/\poly$
by introducing a
reasonable ``parameterization'' of the aforementioned advised-$\dl$-m-reductions to define $\logcfl/\poly$ and $\logdcfl/\poly$.
Here, a \emph{parameterized decision problem}  over an alphabet $\Sigma$ refers to a pair $(L,m)$ of a language $L$ and a size parameter $m$, where $m$ is a function assigning a ``size'' to each input \cite{Yam17a,Yam17b}. A typical size parameter is the binary length $m_{bin}(x)=|x|$ of each input $x$.

To explain the goal of this work, we first review the old results of \cite{BL77,SS78} regarding  the $\nl\subseteq \dl/\poly$ question in terms of state complexity of families of two-way finite automata.
The ``size'' of a finite automaton can be measured by the number of inner states used by the automaton and this gives rise
to the notion of \emph{state complexity}.
There have been fundamental studies conducted on the state complexity of various finite automata. In the 1970s, Berman and Lingas \cite{BL77} and Sakoda and Sipser \cite{SS78} focused particularly on the families of \emph{two-way} deterministic and nondeterministic finite automata (or 2dfa's and 2nfa's, for short) of polynomial state complexities.
After a long recess since their initial works, Kapoutsis \cite{Kap09,Kap12} revitalized the study of the subject and started a systematic study on the nonuniform setting of polynomial state complexities of 2dfa's and 2nfa's. Following these works, Kapoutsis \cite{Kap14} and Kapoutsis and Pighizzini \cite{KP15} later made  significant progress, and Yamakami \cite{Yam18,Yam19a,Yam19b} further expanded the study to a wider subject.

The focal points of \cite{Kap14,KP15,SS78} were set on the nonuniform state complexity classes $\twod$ and $\twon$ of promise problems solved by nonuniform families $\{M_n\}_{n\in\nat}$ of 2dfa's and 2nfa's\footnote{Throughout this paper, we follow the formalism of \cite{Yam18,Yam19a} and fix an input alphabet $\Sigma$ over all machines $M_n$ in the same family of machines. This point is different from, e.g., \cite{Kap09,Kap12,Kap14}.}  $M_n$ having polynomial state complexity (i.e., using $n^{O(1)}$ inner states) in clear analogy with the complexity classes $\p$ and $\np$. Similarly to nonuniform circuit families, we here cope with ``nonuniform'' families of machines that take input strings of ``arbitrary'' sizes.
Berman and Lingas as well as Sakoda and Sipser discovered that a  relationship between $\mathrm{2D}$ and $\mathrm{2N}$ is closely connected to another relationship between the space-bounded complexity classes $\dl$ (deterministic logarithmic-space class) and $\nl$ (nondeterministic logarithmic-space class). Later, Kapoutsis and Pighizzini demonstrated that $\mathrm{2N}/\poly\subseteq \mathrm{2D}$ iff $\nl\subseteq \dl/\poly$, where $\dl/\poly$ is an advice version of $\dl$ and $\twon/\poly$ is a subclass of $\twon$ whose input instances given to underlying 2nfa's are restricted to strings of polynomial lengths.
This equivalence makes it possible to translate standard advised complexity classes into nonuniform state complexity classes. This phenomenon has been observed also in other nonuniform state complexity classes \cite{Yam18,Yam19a}, including classes induced by probabilistic and  quantum finite automata. We remark that an important discovery of \cite{Yam18} is the fact that nonuniform state complexity classes are more closely related to parameterized complexity classes, which naturally include standard (non-advised) complexity classes as special cases.

In sharp contrast to $\twod$ and $\twon$, \emph{one-way deterministic and nondeterministic finite automata} (or 1dfa's and 1nfa's) equipped with polynomially many inner states depict a completely different landscape. The corresponding nonuniform state complexity classes $\oned$ and $\onen$ are proven to be distinct \emph{with no assumption} (see, e.g., \cite{Kap12}).

\subsection{The LOGCFL$\subseteq$LOGDCFL/poly Question}

We have reviewed a logical equivalence between the $\twon/\poly \subseteq \twod$ question and the $\nl\subseteq \dl/\poly$ question.
So far, similar equivalences have been observed only for families of various types of ``finite automata''.  Along this line of study, this work intends to expand the scope of the study of finite automata to \emph{deterministic/nondeterministic pushdown automata}. Unlike finite automata, pushdown automata rely on both inner states and stack symbols. Those elements are crucial in describing the ``size'' of pushdown automaton because, by increasing the size of stack alphabet, we can easily reduce the number of inner states down to even $2$.
Therefore, the total number of both inner states and simultaneously pushable series of stack symbols is treated distinctively and is referred to as the \emph{stack-state complexity}
throughout this work (see Section \ref{sec:pushdown-automata} for its  precise definition).
For our convenience, we will introduce in Section \ref{sec:promise-problem} the notations $\twodpd$  and $\twonpd$ in direct analogy to $\twod$ and $\twon$, respectively, using families of two-way deterministic and nondeterministic pushdown automata having polynomial stack-state complexities. Similarly, we introduce $\onedpd$ and $\onenpd$ in Section \ref{sec:proof-oneway} based on the one-way model of pushdown automata.
By analogy to the $\nl\subseteq \dl/\poly$ question,
a direct application of ``parameterizations'' of $\logcfl $ and $\logdcfl/\poly$ establishes the following equivalence relationship:
$\twonpd/\poly \subseteq \twodpd$ iff $\logcfl\subseteq \logdcfl/\poly$.
This is a pushdown-automaton analogue of the aforementioned result of Kapoutsis and Pighizzini \cite{KP15}.
This result strongly motivates us to conduct an intensive study on  $\twonpd$ and $\twodpd$ toward answering a long-standing open question concerning the complexities of $\logcfl$ and $\logdcfl$.

As for appropriate ``parameterizations'' of $\logcfl/\poly$ and $\logdcfl/\poly$, since they are defined by an advice form of $\dl$-m-reductions to languages in $\cfl$ and $\dcfl$, we will consider a
``parameterization'' of those reduction functions.
Section \ref{sec:define-LOGCFL} will further introduce the parameterization of advised-$\dl$-m-reduction functions, from which we can naturally define $\para\logcfl/\poly$ and $\para\logdcfl/\poly$.
We will demonstrate in Section \ref{sec:proof-twoway} a close relation between the collapse of $\twonpd/\poly$ to $\twodpd$ and the collapse of a restricted form of $\para\logcfl/\poly$ down to $\para\logdcfl/\poly$.

In contrast to the two-way machine model, we will look into two nonuniform stack-state complexity classes $\onedpd$ and $\onenpd$ based on the one-way model of pushdown automata in Section \ref{sec:proof-oneway} because the one-way model is much easier to handle than the two-way model. This situation is similar to the known separation of $\oned\neq\onen$ \cite{Kap09}. We will claim the clear difference between $\onedpd$ and $\onenpd$.
We will actually show  a much stronger statement (i.e., $\onen\nsubseteq \onedpd$ and $\onedpd\nsubseteq \onen$) than this one.

\section{Foundations of The Rest of This Work}\label{sec:preparation}

We will explain the basic notions and notation that the reader needs to read through the rest of this work.

\subsection{Sets, Numbers, and Alphabets}

Given a set $A$, $\PP(A)$ denotes the \emph{power set} of $A$, i.e., the set of all subsets of $A$. The notation $\nat$ denotes the set of all \emph{natural numbers}, including $0$. We further set $\nat^{+}$ to be $\nat-\{0\}$. For two integers $m$ and $n$ with $m\leq n$, $[m,n]_{\integer}$ expresses the \emph{integer interval} $\{m,m+1,\ldots,n\}$, opposed to real intervals.
In particular, when $n\geq1$, $[1,n]_{\integer}$ is abbreviated as $[n]$.
All \emph{logarithms} are taken to the base $2$ and all \emph{polynomials} are assumed to have nonnegative integer coefficients.

An \emph{alphabet} is a nonempty finite set of ``symbols'' or ``letters.'' A \emph{string} over alphabet $\Sigma$ is a finite sequence of symbols in $\Sigma$ and its \emph{length} is the total number of symbols in the string. We use the notation $|x|$ for the \emph{length} of string $x$. The \emph{empty string} is a unique string of length $0$ and is denoted by $\lambda$.
Given an alphabet $\Sigma$, the notation $\Sigma^n$ (resp., $\Sigma^{\leq n}$) denotes the set of all strings over $\Sigma$ of length exactly $n$ (resp., at most $n$).  The notation  $\Sigma^*$ indicates the union $\bigcup_{n\in\nat}\Sigma^n$. Given a string $x$ and an index $i\in[|x|]$, we write $x_{(i)}$ for the $i$th symbol of $x$. A \emph{language} over alphabet $\Sigma$ is a subset of $\Sigma^*$ and its \emph{complement} is $\Sigma^*- L$, which is succinctly denoted by $\overline{L}$.

To express a compound pair of strings, we use the \emph{track notation} of \cite{TYL10}. Given two alphabets $\Sigma$ and $\Theta$, the notation $[\Sigma,\Theta]$ denotes a new alphabet consisting of all symbols of the form  $\track{\sigma}{\tau}$ for $\sigma\in\Sigma$ and $\tau\in\Theta$. We write each string over this new alphabet as $\track{x}{y}$ for $x\in\Sigma^n$ and $y\in\Theta^n$, where $n\in\nat^{+}$.
For notational convenience, we expand this notation to two strings $x$ and $y$ of different lengths, using a special symbol $\#$ not in $\Sigma\cup\Theta$, as follows: if $|x|<|y|$, then $\track{x}{y}$ expresses $\track{x\#^{m}}{y}$ with $m=|y|-|x|$, and if $|x|>|y|$, then $\track{x}{y}$ indicates $\track{x}{y\#^{m}}$ with $m=|x|-|y|$.
Notice that $\track{x}{y}$ is formally a string over the compound alphabet $[\Sigma,\Theta,\#]$, which is defined to be  the set $[\Sigma,\Theta]\cup\{\track{\#}{\tau},\track{\sigma}{\#}\mid \sigma\in\Sigma,\tau\in\Gamma\}$.

A function $f:\Sigma^*\to\Sigma^*$ (resp., $f:\nat\to\Sigma^*$) is said to be \emph{polynomially bounded} if there exists a polynomial $p$ for which $|f(x)|\leq p(|x|)$ (resp., $|f(n)|\leq p(n)$) holds for all $x\in\Sigma^*$ (resp., $n\in\nat$).
In contrast, $f:\nat\to\Sigma^*$ is \emph{length-preserving} if $|f(n)|=n$ holds for all $n\in\nat$. A function $f:\Sigma^*\to\nat$ is called \emph{polynomially honest} if there is a polynomial $p$ satisfying $|x|\leq p(f(x))$ for any $x\in\Sigma^*$.

\subsection{FL and FL/poly}\label{sec:FL-FLpoly}

A \emph{Turing machine} considered in this work is equipped with a read-only input tape, a rewritable work tape, and (possibly) a write-once\footnote{A tape is \emph{write-once} if its tape head never moves to the left and, whenever it writes a nonempty symbol, it must move to the right blank cell.} output tape.
For the basics of Turing machines, the reader refers to \cite{HU79} as well as \cite{Yam18,Yam19a}. Given two alphabets $\Sigma$ and $\Gamma$, a function $f:\Sigma^*\to\Gamma^*$ is in $\fl$ if there is a \emph{deterministic Turing machine} (or a DTM, for short) $M$ with a designated write-once output tape such that, given any input $x$, $M$ halts in polynomial time  and produces $f(x)$ on the output tape using only $O(\log|x|)$ work space.
It is important to note that this space bound is applied only to the work tape.
By further supplementing Karp-Lipton style ``advice'' to underlying DTMs, we can formulate an advised version of $\fl$, denoted $\fl/\poly$, by analogy with $\p/\poly$.
This can be done by providing such a DTM (briefly called an \emph{advised DTM}) with a read-only \emph{advice tape}, which carries an \emph{advice string} of length $n^{O(1)}$ over an appropriate advice alphabet $\Gamma$, where $n$ indicates any input length. Those advice strings are provided to the advised DTM by an \emph{advice function} mapping  $\nat$ to $\Gamma^*$. Such an advice function is not necessarily computable in general.

For later use, we intend to state a useful characterization of $\fl/\poly$. For the sake of completeness, we include the proof of this characterization.

\begin{lemma}\label{FL-advice}
For any function $f$, it follows that $f\in \fl/\poly$ iff  there exist a polynomially-bounded advice function $h$ and a function $g\in\fl$ such that $f(x)=g(\track{x}{h(|x|)})$ for all $x$.
\end{lemma}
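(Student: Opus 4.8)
The plan is to prove both directions of the biconditional by translating between an advised DTM computing $f$ and the pair $(g,h)$, where the advice function $h$ supplies exactly the advice string that the machine would have read, and $g$ simulates the machine on an input whose two tracks carry the original input and the advice.

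First I would prove the ``if'' direction, which is the easier one. Suppose $g\in\fl$ and $h$ is polynomially bounded with $f(x)=g(\track{x}{h(|x|)})$ for all $x$. Let $N$ be a logarithmic-space DTM computing $g$. I would construct an advised DTM $M$ for $f$ that, on input $x$, uses $h(|x|)$ as its advice string (this is legitimate since $h$ is polynomially bounded) and simulates $N$ on the encoded input $\track{x}{h(|x|)}$. The only point needing care is that $N$ expects a single input tape holding $\track{x}{h(|x|)}$, whereas $M$ has $x$ on its input tape and $h(|x|)$ on its advice tape. But $M$ can synthesize each symbol $\track{x_{(i)}}{h(|x|)_{(i)}}$ of the combined input on the fly by reading position $i$ from the input tape and position $i$ from the advice tape, using only an $O(\log|x|)$ counter to track the position $i$; the padding convention with $\#$ handles the case where the two lengths differ. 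Hence $M$ runs in logarithmic work space and polynomial time, so $f\in\fl/\poly$.

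Next I would prove the ``only if'' direction. Suppose $f\in\fl/\poly$ via an advised DTM $M$ with advice function $a$, where $|a(n)|\leq p(n)$ for some polynomial $p$. I would set the advice function to be $h=a$, which is polynomially bounded by hypothesis. For $g$, I would define a DTM $N$ that, on an input of the form $\track{x}{w}$, extracts $x$ from the first track and $w$ from the second track and then simulates $M$ on actual input $x$ with $w$ playing the role of the advice string. Reading the first track gives access to $x$ and reading the second track gives access to the advice; the $\#$-padding convention lets $N$ recover the genuine $x$ and $w$ regardless of their relative lengths. Since $M$ uses only $O(\log|x|)$ work space and $|x|\leq|\track{x}{w}|$, the machine $N$ also operates within logarithmic work space in its own input length, so $g\in\fl$. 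By construction $g(\track{x}{h(|x|)})=g(\track{x}{a(|x|)})$ equals the output of $M$ on $x$ with advice $a(|x|)$, which is exactly $f(x)$.

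The main obstacle, which is really just a bookkeeping subtlety rather than a genuine difficulty, is the reconciliation of the two-tape access pattern (separate input and advice tapes) with the single-track encoding $\track{\cdot}{\cdot}$, together with the length-mismatch padding: one must verify that accessing the $i$th symbol of $\track{x}{h(|x|)}$ can be simulated within the logarithmic-space budget using only a position counter and that the $\#$ padding does not corrupt the recovered strings. Once this correspondence is set up carefully, the equality $f(x)=g(\track{x}{h(|x|)})$ follows immediately in both directions, and the space and time bounds are preserved because the transformation is purely syntactic and introduces only an $O(\log|x|)$-bit counter.
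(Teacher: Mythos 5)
Your proposal is correct and takes essentially the same route as the paper's own proof: both directions proceed by direct simulation, merging the input and advice tapes into the track encoding $\track{x}{h(|x|)}$ (respectively splitting it back into two tapes), with $O(\log|x|)$ extra work space for head-position counters and the $\#$-padding convention handling mismatched lengths. The only cosmetic difference is that the paper additionally pre-marks the end of the $O(\log n)$ work tape to enforce the space bound, a technicality your argument does not need to spell out.
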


\begin{yproof}
(Only If -- part) Since $f\in\fl/\poly$, we take a polynomial $p$, an  advice function $h$, and an underlying advised DTM $M$ such that (i) $|h(n)|\leq p(|x|)$ holds for all $n\in\nat$ and (ii) $M$ takes two inputs $x$ and $h(|x|)$ written on two separate tapes and eventually produces $f(x)$ on its output tape. We combine those two inputs to form a new string $\track{x}{h(|x|)}$. We want to design a new DTM, say, $N$.
To ensure the logarithmic-space bound of $N$ in the following simulation, $N$ first moves its tape head to the right and marks the rightmost tape cell of the $O(\log{n})$ work tape by leaving a special symbol $\dashv$.
The machine $N$ starts with input $\track{x}{w}$ and simulates $M$ on the input string pair $(x,w)$ given on the input and the advice tapes.
To remember the locations of two tape heads of $M$ for $x$ and $u$, we also use extra $O(\log|x|)$ work space to store the corresponding tape cell indices because of $|w|\leq p(|x|)$.
We define $g(z)$ to be the outcome of $N$ on input $z$. In particular, $f(x)$ equals $g(\track{x}{h(|x|)})$ for any $x$.

(If -- part) Conversely, assume that there are a polynomially-bounded advice function $h$ and a function $g\in\fl$ satisfying  $f(x)=g(\track{x}{h(|x|)})$ for any $x$. Take a logarithmic-space DTM $M$ that computes $g$. Consider another advised DTM $N$ that behaves as follows: on input strings $x$ and $w$ on an input and an advice tapes, simulate $M$ on $\track{x}{w}$ using an extra counter to remember the length $|x|$. This requires only extra $O(\log|x|)$ space. It is easy to check that $N$ computes $f$ correctly when $h(|x|)$ is provided as $w$.
\end{yproof}

In the original definitions of both $\logcfl$ and $\logdcfl$ discussed in Section \ref{sec:introduction}, $\fl$-functions play a key role as reduction functions. Formally, given two languages $A$ over $\Sigma$ and $B$ over $\Gamma$,  $A$ is \emph{logarithmic-space many-once reducible} (or $\dl$-m-reducible, for short) to $B$ if there exists a function $f:\Sigma^*\to\Gamma^*$ (called a \emph{reduction function}) in $\fl$ such that, for any $x\in\Sigma^*$, $x\in A$ iff $f(x)\in B$.
Given a language family $\CC$, $\mathrm{LOG}(\CC)$ denotes the collection of all languages $L$ that are $\dl$-m-reducible to some  languages in $\CC$.
In a similar way, we can define $\mathrm{LOG}/\poly(\CC)$ by replacing ``$\fl$'' in the above definition with ``$\fl/\poly$.'' In the presence of advice, we use the term of \emph{advised-$\dl$-m-reduction}.

\subsection{Pushdown Automata}\label{sec:pushdown-automata}

Context-free languages are defined by context-free grammars. Those languages are also characterized by \emph{one-way nondeterministic pushdown automata} (or  1npda's). A 1npda $M$ is formally defined as a nonuple $(Q,\Sigma,{\{\vdash,\dashv\}}, \Gamma,\delta,q_0,\bot, Q_{acc},Q_{rej})$, where $Q$ is a finite set of inner states, $\Sigma$ is an input alphabet, $\vdash$ and $\dashv$ are the left and the right endmarkers,  $\Gamma$ is a stack alphabet, $\delta: (Q-Q_{halt})\times\check{\Sigma}_{\lambda}\times\Gamma\to \PP(Q\times \Gamma^{\leq e})$ is a transition function with $\check{\Sigma} = \Sigma\cup\{\vdash,\dashv\}$, $\check{\Sigma}_{\lambda}=\check{\Sigma}\cup\{\lambda\}$,  $q_0$ is an initial state in $Q$, $\bot$ is the bottom marker in $\Gamma$, $Q_{acc}$ and $Q_{rej}$ are sets of accepting and rejecting states in $Q$, respectively, with $Q_{acc}\cap Q_{rej}=\setempty$ and $Q_{halt} = Q_{acc}\cup Q_{rej}$.
Note that $e$ is called the \emph{push size} of $M$.
If $M$ further satisfies the following \emph{deterministic requirement}, then it is called a \emph{one-way deterministic pushdown automaton} (or a 1dpda): (i) $|\delta(q,\sigma,a)|\leq1$ for any $(q,\sigma,a)\in Q\times\check{\Sigma}_{\lambda}\times\Gamma$ and (ii) whenever $\delta(q,\varepsilon,a)\neq\setempty$, it follows that $\delta(q,\sigma,a)=\setempty$ for any symbol $\sigma\in\check{\Sigma}$.
When $M$ is deterministic, we simply write $\delta(q,\sigma,a)=(p,\gamma)$ instead of $(p,\gamma)\in\delta(q,\sigma,a)$.
A \emph{stack content} refers to a series of symbols stored sequentially from the bottom to the top in a stack.
We express such a stack content as $a_1a_2\cdots a_n$, where  $a_1=\bot$ and $a_n$ is a topmost symbol. The \emph{stack height} is the length of this stack content.

A \emph{configuration} of $M$ is a triplet $(q,w,\gamma)$, where $q\in Q_n$, $w\in\Sigma^{*}$, and $\gamma\in (\Gamma-\{\bot\})^*\bot$.
This depicts a circumstance where $M$ is in inner state $q$, a tape head is scanning the leftmost symbol of $w$, and $\gamma$ is a stack content.
The \emph{initial configuration} is $(q_0,{\vdash{x}\dashv},\bot)$. A transition $(p,\tau)\in\delta(q,\sigma,a)$ indicates that, if $M$'s current configuration is of the form $(q,\sigma w,\gamma a)$, $M$ changes $q$ to $p$ and replaces $a$ by $\tau$. If $a\neq\lambda$, then $M$'s tape head must move to the right.
For two configurations $conf_1$ and $conf_2$, $conf_1\vdash conf_2$ means that $conf_2$ is obtained from $conf_1$ by a single application of $\delta$ (which corresponds one step of $M$).
If we take a finite number of steps (including zero steps), we write $conf_1\vdash^* conf_2$.

The value $|Q|+|\Gamma^{\leq e}|$ is referred to as the \emph{stack-state complexity} of $M$. This notion is compared to the \emph{state complexity}, which indicates $|Q|$, of a finite automaton.
Given any string $x$, we say that $M$ \emph{accepts} (resp., \emph{rejects}) $x$ if $M$ begins with the initial configuration,  reads ${\vdash\!{x}\!\dashv}$, enters an inner state in $Q_{acc}$ (resp., $Q_{rej}$), and halts. Given a language $L$, $M$ \emph{recognizes} $L$ if (i) for any $x\in L$, $M$ accepts $x$ and (ii) for any $x\in\overline{L}$, $M$ rejects $x$. To express this language $L$, we often use the notation $L(M)$. At this moment, we formally introduce two fundamental families $\cfl$ and $\dcfl$ as the collections of all languages recognized by 1npda's and by 1dpda's, respectively.

As for two-way versions of 1npda's and 1dpda's, which are succinctly called 2npda's and 2dpda's, we modify their aforementioned definition of $M$ as follows. A new transition function $\delta$ maps $(Q-Q_{halt})\times \check{\Sigma}_{\lambda}\times \Gamma$ to $\PP(Q\times \Gamma^{\leq e}\times D)$, where $D=\{-1,0,+1\}$.
Assume that $M$ is in inner state $q$, scanning $\sigma$ on an input tape and $a$ on a topmost stack cell. A transition of the form $(q,z,d)\in \delta(q,\sigma,a)$ causes $M$ to change $q$ to $p$, replace $a$ by $z$, and move an input-tape head in direction $d$. Note that, when $M$ reads $\lambda$, the tape head must stay still, i.e., must take the value $d=0$.

\subsection{Advice Extensions of LOGCFL and LOGDCFL}\label{sec:advice-extension}

Let us  define $\logdcfl/\poly$ and $\logcfl/\poly$, which are respectively advice versions of $\logdcfl$ and $\logcfl$, and state a few important  characterizations of them.
We first review an advice version of $\cfl$.

Karp-Lipton style advice for pushdown automata was discussed in \cite{Yam10} and the language family $\cfl/n$ was introduced there by splitting an input tape of each underlying 1npda into two separate tracks, one of which holds a standard input string $x$ and the other holds an advice string of \emph{length equal to} $x$.
For distinction, when advice is given to an underlying 1npda, we call such a machine an \emph{advised-1npda} to emphasize the use of advice. Since an advised-1npda moves its tape head only in one direction until it either reads the right endmarker or enters a halting state before the endmarker, the 1npda reads the advice string only once from left to right.
This advice model is essentially different from the one equipped with ``separate'' advice tapes whose heads
can freely move in two directions. We will discuss this two-tape model later.

Consider a compound alphabet $\Sigma_{\Theta} = [\Sigma,\Theta,\#]$ composed of two alphabets $\Sigma$ and $\Theta$.
Given a language $L$ over $\Sigma_{\Theta}$ and an advice function $h:\nat\to\Theta^*$, we define $L[h]$ as the language $\{x\mid \track{x}{h(|x|)}\in L\}$ over $\Sigma$.
By extending this notation, for a given function $f$, we write $f[h]$ to denote the function $g$ defined as $g(x) = f(\track{x}{h(|x|)})$ for all $x$.
With the help of these notations,  $\cfl/n$ is precisely composed of all languages $L[h]$ for  length-preserving advice functions $h$ and languages $L\in\cfl$.
Similarly, we define $\dcfl/n$ using a deterministic version of advised-1npda's, which are called \emph{advised-1dpda's} (\emph{with no separate advice tape}).

Now, we are ready to define $\logcfl/\poly$ and $\logdcfl/\poly$ using advised-$\dl$-m-reductions.

\begin{definition}\label{def-advice-class}
The advised complexity class $\logcfl/\poly $ (resp., $\logdcfl/\poly$) is defined to be $\mathrm{LOG}/\poly(\cfl)$ (resp., $\mathrm{LOG}/\poly(\dcfl)$).
\end{definition}

The advised families $\logcfl/\poly$ and $\logdcfl/\poly$ are quite robust classes in the following sense. We further strengthen this robustness in Lemma \ref{CFL-poly-case}.

\begin{lemma}\label{CFL-robust}
$\logcfl/\poly = \mathrm{LOG}/\poly(\cfl/n)$ and $\logdcfl/\poly = \mathrm{LOG}/\poly(\dcfl/n)$.
\end{lemma}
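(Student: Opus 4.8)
The plan is to prove the two set equalities by establishing both inclusions separately; since the nondeterministic and deterministic cases are verified by identical arguments (the only difference being whether the witness language lies in $\cfl$ or $\dcfl$), I would present the $\cfl$ case and remark at the end that ignoring an auxiliary track preserves determinism. The easy inclusion is $\mathrm{LOG}/\poly(\cfl)\subseteq\mathrm{LOG}/\poly(\cfl/n)$, for which it suffices to observe $\cfl\subseteq\cfl/n$ together with the evident monotonicity of the operator $\mathrm{LOG}/\poly(\cdot)$. To see $\cfl\subseteq\cfl/n$, given $L\in\cfl$ over $\Sigma$ I would set $L'=\{\track{x}{y}\mid x\in L,\ |y|=|x|\}$, which is context-free because a 1npda for $L'$ can simulate a 1npda for $L$ on the top track while disregarding the bottom track, and then take any length-preserving advice $h$ (say $h(n)=0^n$); since $L'[h]=\{x\mid \track{x}{h(|x|)}\in L'\}=L$, this exhibits $L\in\cfl/n$.

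The substantial direction is $\mathrm{LOG}/\poly(\cfl/n)\subseteq\mathrm{LOG}/\poly(\cfl)$. Let $A\in\mathrm{LOG}/\poly(\cfl/n)$, witnessed by an advised-$\dl$-m-reduction $f\in\fl/\poly$ to a language $B\in\cfl/n$. Write $B=C[h_B]$ with $C\in\cfl$ and $h_B$ length-preserving, so that $x\in A\iff f(x)\in B\iff \track{f(x)}{h_B(|f(x)|)}\in C$. By Lemma \ref{FL-advice} I would decompose $f$ as $f(x)=g(\track{x}{h_f(|x|)})$ with $g\in\fl$ and $h_f$ polynomially bounded; moreover, since $\fl/\poly$-functions halt in polynomial time, fix a polynomial $p$ with $|f(x)|\leq p(|x|)$. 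The target is the merged reduction $f'(x)=\track{f(x)}{h_B(|f(x)|)}$, for which $x\in A\iff f'(x)\in C$ with $C\in\cfl$; it then remains only to show $f'\in\fl/\poly$.

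The key obstacle, and the reason the argument is not a one-line composition, is that an advice function depends only on the input length, whereas $h_B$ is evaluated at $|f(x)|$, a quantity that may differ across inputs $x$ sharing the same length $|x|$. I would resolve this with a precomputed lookup table: let the new advice $h'(n)$ encode both $h_f(n)$ and the finite list $h_B(0),h_B(1),\ldots,h_B(p(n))$ of all advice values at the (polynomially many) possible output lengths, separated by a fresh marker; since each $h_B(m)$ has length $m\leq p(n)$, the table has length $O(p(n)^2)$, so $h'$ is polynomially bounded. I would then describe an $\fl$-machine $g'$ that, on $\track{x}{h'(|x|)}$, (i) simulates $g$ on $\track{x}{h_f(|x|)}$ while counting emitted symbols to determine $\ell=|f(x)|$, (ii) scans the table to locate the $(\ell+1)$st entry $h_B(\ell)$, and (iii) re-simulates $g$, pairing each emitted symbol of $f(x)$ with the successive symbols of $h_B(\ell)$ so as to print $\track{f(x)}{h_B(\ell)}$. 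Every phase runs in logarithmic work space, since the counter for $\ell$ and the table index are of size $O(\log p(n))=O(\log n)$ and $g$ itself is log-space; hence $g'\in\fl$.

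Finally, by the ``if'' part of Lemma \ref{FL-advice}, $f'(x)=g'(\track{x}{h'(|x|)})$ lies in $\fl/\poly$, which completes the advised-$\dl$-m-reduction of $A$ to $C\in\cfl$ and therefore places $A$ in $\logcfl/\poly$. For the deterministic statement the same construction applies verbatim with $C\in\dcfl$, the only additional point being that the 1dpda witnessing $L'\in\dcfl$ in the easy inclusion remains deterministic when it ignores the auxiliary track, and that $C$ itself is unchanged, so $f'$ reduces $A$ to a language in $\dcfl$ and $A\in\logdcfl/\poly$. I expect step (iii) of the $g'$ construction — printing the two-track output on a single write-once tape in logarithmic space without ever storing $f(x)$ in full — to be the most delicate point to write out carefully, but it reduces to re-running the log-space machine $g$ and synchronizing its output stream with a pointer into the advice tape.
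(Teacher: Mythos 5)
Your proposal is correct and takes essentially the same route as the paper's own proof: both directions match, with the easy inclusion via $\cfl\subseteq\cfl/n$ and the substantial one resting on Lemma \ref{FL-advice} together with the same lookup-table trick, namely a combined advice string encoding $h_f(n)$ and the concatenated list $h_B(0),h_B(1),\ldots$ up to the polynomial output-length bound, consumed by an $\fl$-function that computes $|f(x)|$ and pairs $f(x)$ with the matching table entry to reduce to $C\in\cfl$ (resp.\ $\dcfl$). The only difference is expository: you make explicit the two-pass log-space simulation (count the output length, then re-run $g$ while synchronizing with the table entry) that the paper leaves implicit in its abstract definition of the function $s$.
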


\begin{yproof}
We show only the first statement because the second one is similarly proven.  Since $\cfl\subseteq \cfl/n$, we instantly obtain $\logcfl/\poly \subseteq \mathrm{LOG}/\poly(\cfl/n)$.
Conversely, let $L$ denote any language over alphabet $\Sigma$ in $\mathrm{LOG}/\poly(\cfl/n)$. Take a function $f\in\fl/\poly$, a polynomial $p_1$, and a language $A\in\cfl/n$ over alphabet $\Xi$ such that $L= \{x\in\Sigma^*\mid f(x)\in A\}$
and $|f(x)|\leq p_1(|x|)$ for all $x\in\Sigma^*$.
By Lemma \ref{FL-advice}, we further take a function $g\in\fl$, a polynomial  $p_2$, and an advice function $\ell:\nat\to\Gamma^*$ for advice alphabet $\Gamma$ satisfying $f=g[\ell]$ and  $|\ell(n)| \leq p_2(n)$ for all $n\in\nat$. Note that $g$ maps $(\Sigma_{\Gamma})^*$ to $\Xi^*$, where $\hat{\Sigma}_{\Gamma} = [\Sigma,\Gamma,\#]$.
Moreover, we take a language $B\in \cfl$, a polynomial $p_3$, and an advice function $h:\nat\to\Theta^*$ for advice alphabet $\Theta$ such that $A=B[h]$ and $|h(n)|\leq p_3(n)$ for all $n\in\nat$.

Here, we abbreviate as $\hat{h}(n)$ the string $h(0)\natural h(1)\natural \cdots \natural h(n)$ with a separator $\natural$ and we intend to set  $k(n)$ to be $\track{\ell(n)}{\hat{h}(n)}$, which is a string over the compound alphabet $\hat{\Gamma}_{\Theta} = [\Gamma,\Theta\cup\{\natural\},\#]$.
Next, we define $s(\track{x}{w}) = \track{g(z)}{u_{|g(z)|}}$, where $w=\track{y}{u_0\natural u_1\natural \cdots \natural u_{|x|}}$ in $(\hat{\Gamma}_{\Theta})^*$ and
$z=\track{x}{y}$ in $(\hat{\Sigma}_{\Gamma})^*$.
Finally, let $\hat{\Xi}_{\Theta} = [\Xi,\Theta\cup\{\natural\},\#]$ and define $t=s[k]$, which is a new reduction function from $\Sigma^*$ to $(\hat{\Xi}_{\Theta})^*$.
Since $s\in\fl$ and $k$ is polynomially bounded, $t\in\fl/\poly$ follows immediately. Since $A=\{y\mid \track{y}{h(|y|)}\in B\}$, we conclude that $x\in L$ iff $t(x)=s(\track{x}{k(|x|)}) = \track{f(x)}{h(|f(x)|)}\in B$. Therefore, $L$ belongs to $\mathrm{LOG}/\poly(\cfl) = \logcfl/\poly$.
\end{yproof}

Unlike the advised-1npda's with no separate advice tape, let us  consider another model of advised-1npda that holds standard input strings and advice strings on two separate tapes and move its advice-tape head in two directions.
Formally, a language $L$ over alphabet $\Sigma$ is in $\cfl/\poly$ if there exist an advice alphabet $\Theta$, a polynomially-bounded advice function $h:\nat\to\Theta^*$, and an advised-1npda $M$ (equipped with two separate tapes) such that, for any $x\in\Sigma^*$, $x\in L$ iff $M$ starts with $x$ on an input tape and $h(|x|)$ on an advice tape and $M$ eventually accepts $x$ by moving an advice-tape head freely in two directions. Obviously, $\cfl/n\subseteq \cfl/\poly$ follows.

\begin{lemma}\label{CFL-poly-case}
$\logcfl/\poly$ and $\logdcfl/\poly$ coincide with $\mathrm{LOG}/\poly(\cfl/\poly)$ and $\mathrm{LOG}/\poly(\dcfl/\poly)$, respectively.
\end{lemma}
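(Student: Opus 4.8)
The plan is to prove both equalities through the two inclusions $\logcfl/\poly\subseteq \mathrm{LOG}/\poly(\cfl/\poly)$ and $\mathrm{LOG}/\poly(\cfl/\poly)\subseteq \logcfl/\poly$; the deterministic case is then obtained verbatim by replacing ``$\cfl$'' and ``nondeterministic'' with ``$\dcfl$'' and ``deterministic'' everywhere. The first inclusion is immediate: since $\cfl\subseteq \cfl/n\subseteq \cfl/\poly$, we get $\logcfl/\poly = \mathrm{LOG}/\poly(\cfl)\subseteq \mathrm{LOG}/\poly(\cfl/\poly)$. All the content lies in the reverse inclusion, whose essence is that a two-way advice tape can be absorbed at the cost of a logarithmic-space reduction.

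For the reverse inclusion, take $L\in \mathrm{LOG}/\poly(\cfl/\poly)$, witnessed by a reduction $f\in\fl/\poly$ and a language $A\in\cfl/\poly$ with $x\in L$ iff $f(x)\in A$. Let $M$ be the advised-1npda (with a two-way advice tape) and $h$ the polynomially-bounded advice witnessing $A\in\cfl/\poly$, so that $f(x)\in A$ iff $M$ accepts $f(x)$ with advice $h(|f(x)|)$. By Lemma \ref{FL-advice} I would write $f=g[\ell]$ with $g\in\fl$ and $\ell$ polynomially bounded, and then collapse all advice into a single reduction, following the combining technique in the proof of Lemma \ref{CFL-robust}: define $k(n)$ to encode $\ell(n)$ together with a block $h(0)\natural h(1)\natural\cdots\natural h(q(n))$ for a suitable polynomial $q$, and set $r(x)=\track{x}{k(|x|)}$, which is polynomially bounded and lies in $\fl/\poly$.

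The key step is to recognize $L$ by a single nondeterministic auxiliary pushdown automaton $N$ (Cook's model) that reads $r(x)$ on its two-way input tape, in logarithmic work space and polynomial time. The machine $N$ simulates $M$ on the \emph{virtual input} $f(x)$ without ever storing it: whenever $M$ advances its one-way input head to position $i$, $N$ recomputes the single symbol $f(x)_{(i)}$ by running the logspace procedure for $g$ on the $x$- and $\ell$-tracks of its input, tracking $i$ by a logspace counter. The two-way advice head of $M$ over $h(|f(x)|)$ is simulated by moving over the corresponding $h$-block recorded on $N$'s (two-way) input track, again tracking the advice-head position by a logspace counter, after $N$ first computes $m=|f(x)|$ in logspace to select the correct block. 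Finally, $M$'s stack is carried directly on $N$'s pushdown and $M$'s finite control on $N$'s. Thus $N$ is a genuine polynomial-time logarithmic-space auxiliary pushdown automaton, so by Sudborough's characterization \cite{Sud78} its language $L(N)$ lies in $\logcfl=\mathrm{LOG}(\cfl)$. Since $x\in L$ iff $r(x)\in L(N)$ with $r\in\fl/\poly$, composing $r$ with the $\fl$-reduction from $L(N)$ to a context-free language (again by the combining technique of Lemma \ref{CFL-robust}) places $L$ in $\mathrm{LOG}/\poly(\cfl)=\logcfl/\poly$, as desired.

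The hard part is the polynomial time bound that Sudborough's characterization demands, since a two-way advice head makes it nonobvious that $M$ halts in polynomial time. I would dispose of this by imposing a polynomial clock justified by a surface-configuration argument. A \emph{surface configuration} of $M$ is a triple (inner state, input-head position, advice-head position), of which there are only $|Q|\cdot(|f(x)|+2)\cdot(|h(m)|+2)=\poly(|x|)$ many with $m=|f(x)|$; decomposing any accepting computation (in the nondeterministic case) or any halting computation (in the deterministic case) by stack-height levels and excising the sub-computations that return to the same surface configuration at the same level shows that, whenever $M$ accepts, it does so by a computation whose length, and whose stack height, are polynomial in the number of surface configurations. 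Imposing the corresponding polynomial clock on $N$ therefore preserves the accepted language while guaranteeing polynomial running time. In the $\logdcfl/\poly$ case the determinacy of $N$ is clear, since both the on-the-fly evaluation of $g$ and the simulation of the deterministic $M$ are deterministic, so Sudborough's deterministic-auxiliary-pushdown characterization of $\logdcfl$ applies in the same way.
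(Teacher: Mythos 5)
Your construction follows the paper's own proof up to its decisive point: the trivial inclusion via $\cfl\subseteq\cfl/\poly$, the splitting $f=g[\ell]$ via Lemma \ref{FL-advice}, the concatenated advice blocks $h(0)\natural\cdots\natural h(q(n))$, and the combined advice $k(n)=\track{\ell(n)}{\hat{h}(n)}$ are exactly the steps of the paper, which then compresses your whole machine-simulation into the single assertion that the auxiliary language $B$ (your $L(N)$) lies in $\logcfl$ and closes with Lemma \ref{reduction-advice-LOGCFL}. The genuine gap is in your justification of that assertion, specifically the surface-configuration excision and the polynomial clock. Excising a sub-computation that returns to the same surface configuration at the same stack level is sound only when the stack never dips below that level in between (only then do the two \emph{full} configurations, stack content included, coincide); a computation of $M$ can run for exponentially many steps without ever containing such an excisable repetition. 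Concretely, an advised-1dpda may use its stack as an $n$-bit binary counter (least significant bit on top) and its two-way advice head, measured from the advice tape's left end, as a unary register: to increment, it pops the trailing $1$'s while stepping the advice head right, flips the first $0$, and re-pushes that many $0$'s while stepping the head back. This machine has $O(1)$ states, stack height $O(n)$, polynomially many surface configurations, and a \emph{unique} accepting computation of length $\Theta(2^n)$ in which every increment dips below the relevant level, so nothing can be excised. Imposing your polynomial clock on $N$ therefore changes the accepted language; in the deterministic case --- precisely where the lemma's $\logdcfl/\poly$ half lives --- the clocked machine outright rejects strings that $M$ accepts, so the appeal to Sudborough's characterization \cite{Sud78} collapses.

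A second warning sign that the shortening claim cannot be right as stated: if accepting computations of unclocked logarithmic-space auxiliary pushdown automata could always be shortened to polynomial length by your level decomposition, then Cook's characterization of $\p$ by such unclocked machines \cite{Coo71} would coincide with Sudborough's polynomial-time characterization of $\logcfl$, giving $\p=\logcfl$. Your $M$ carries no auxiliary work tape, but its two-way advice head already supplies the same obstruction, as the counter machine shows. What the reverse inclusion actually needs is a bound built into the model or established by a normalization \emph{before} the simulation --- e.g., an argument that membership in $\cfl/\poly$ (resp.\ $\dcfl/\poly$) may be witnessed by an advised machine whose $\lambda$-moves and advice-head excursions are polynomially bounded --- rather than a bound extracted after the fact from an arbitrary accepting computation. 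As it stands, everything in your proposal up to the time analysis mirrors the paper, and the time analysis is the one step that fails; it is exactly the ``hard part'' you yourself flagged.
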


For technical reason, we first prove the following characterization lemma.

\begin{lemma}\label{reduction-advice-LOGCFL}
Let $L$ be any language. The following statements are logically equivalent.
\renewcommand{\labelitemi}{$\circ$}
\begin{enumerate}\vs{-2}
  \setlength{\topsep}{-2mm}%
  \setlength{\itemsep}{1mm}%
  \setlength{\parskip}{0cm}%

\item $L\in\logcfl/\poly$.

\item There exist a polynomially-bounded advice function $h$ and a language $K\in \logcfl$ such that $L=K[h]$.
\end{enumerate}\vs{-2}
The same statements also hold for $\logdcfl/\poly$.
\end{lemma}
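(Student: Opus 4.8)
The plan is to prove the two directions of the equivalence separately, and then note that the deterministic case follows by the same argument with ``$\dcfl$'' in place of ``$\cfl$'' throughout. Recall that $\logcfl/\poly = \mathrm{LOG}/\poly(\cfl)$ by Definition \ref{def-advice-class}, so the task is to convert between an advice on the \emph{reduction function} and an advice on the \emph{whole reduced problem}, i.e.\ between $f\in\fl/\poly$ feeding into a $\cfl$-language and a plain $\dl$-m-reduction composed with a length-dependent advice supplied directly to the input.

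For the direction from the second statement to the first, I would start from $L=K[h]$ with $K\in\logcfl$ and $h$ polynomially bounded. By definition of $\logcfl$, there is a reduction function $g_0\in\fl$ and a language $B\in\cfl$ with $K=\{z\mid g_0(z)\in B\}$. Then $x\in L$ iff $\track{x}{h(|x|)}\in K$ iff $g_0(\track{x}{h(|x|)})\in B$. Setting $f(x)=g_0(\track{x}{h(|x|)})=g_0[h](x)$, Lemma \ref{FL-advice} gives $f\in\fl/\poly$ (this is exactly the ``If'' direction of that lemma, since $g_0\in\fl$ and $h$ is polynomially bounded). Hence $L\in\mathrm{LOG}/\poly(\cfl)=\logcfl/\poly$. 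This direction is essentially a repackaging and should be routine.

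The converse direction, from the first statement to the second, is where the real work lies. Assume $L\in\logcfl/\poly=\mathrm{LOG}/\poly(\cfl)$, so there are $f\in\fl/\poly$, a polynomial bound $|f(x)|\le p(|x|)$, and $B\in\cfl$ with $x\in L$ iff $f(x)\in B$. By Lemma \ref{FL-advice} I decompose $f$ as $f=g[h_0]$ for some $g\in\fl$ and a polynomially-bounded advice function $h_0$, so that $f(x)=g(\track{x}{h_0(|x|)})$. The natural candidate is to set $h=h_0$ and $K=\{\,\track{x}{y}\mid g(\track{x}{y})\in B\,\}$; then $K\in\logcfl$ via the reduction $\track{x}{y}\mapsto g(\track{x}{y})$ into $B$, and $x\in L$ iff $g(\track{x}{h_0(|x|)})\in B$ iff $\track{x}{h_0(|x|)}\in K$ iff $x\in K[h_0]$, as desired. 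The one technical point to handle carefully is that the advice length $|h_0(|x|)|$ may differ from $|x|$, so $\track{x}{h_0(|x|)}$ is a string over the compound alphabet $[\Sigma,\Theta,\#]$ with padding symbols $\#$; one must make sure $K$ is defined over exactly this compound alphabet and that $g$ ignores (or is well-defined on) the $\#$-padding so that membership of $\track{x}{y}$ in $K$ depends only on the intended input pair. This is the analogue of the bookkeeping already carried out in the proof of Lemma \ref{CFL-robust}, and I expect it to be the main obstacle — not deep, but requiring attention to the alphabet conventions fixed in Section \ref{sec:preparation}.

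Finally, I would observe that every step above preserves determinism: the characterization Lemma \ref{FL-advice} is about $\fl$ and makes no reference to the target language family, and the target language $B$ is carried through unchanged. Replacing $\cfl$ by $\dcfl$ and $\logcfl$ by $\logdcfl$ leaves the argument intact, so the same statements hold for $\logdcfl/\poly$, completing the proof.
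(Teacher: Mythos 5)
Your proposal is correct and follows essentially the same route as the paper's own proof: both directions hinge on Lemma \ref{FL-advice} to move the advice between the reduction function and the input, with $K$ defined as the $g$-preimage of the $\cfl$ (resp.\ $\dcfl$) target language, and the deterministic case handled verbatim. The padding issue you flag as the main obstacle is in fact harmless, since $g\in\fl$ is total on the compound alphabet and $K[h]$ only ever evaluates membership at the strings $\track{x}{h(|x|)}$, which is implicitly why the paper does not dwell on it either.
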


\begin{proof}
(1 $\Rightarrow$ 2)
Assume that $L\in\logcfl/\poly$. Since $L$ is in $\mathrm{LOG}/\poly(\cfl)$, there exist a function $f\in\fl/\poly$ and a 1npda $M$ such that $L=\{x\mid \text{ $M$ accepts $f(x)$ }\}$.
By Lemma \ref{FL-advice}, there exist a function $g\in\fl$ and a polynomially-bounded function $h$ satisfying $f(x) = g(\track{x}{h(|x|)})$ for all $x$. We define $K$ as the set $\{z\mid \text{ $M$ accepts $g(z)$ }\}$. It then follows that
$L=\{x\mid \text{ $M$ accepts $g(\track{x}{h(|x|)})$ }\} = \{x\mid \track{x}{h(|x|)}\in K\} = K[h]$.

(2 $\Rightarrow$ 1)
We assume that $L=K[h]$ for a polynomially-bounded advice function $h$ and a language $K\in \logcfl$.
Take a function $f\in\fl$ and a language $A\in\cfl$ satisfying $K=\{z\mid f(z)\in A\}$. We define $g(x) = f(\track{x}{h(|x|)})$ for all $x$.
By Lemma \ref{FL-advice}, $g$ is in $\fl/\poly$. It then follows that $L=\{x\mid \track{x}{h(|x|)}\in K\} = \{x\mid g(x)\in A\}$. Thus, $L$ is in $\logcfl/\poly$.
\end{proof}

Let us return to the proof of Lemma \ref{CFL-poly-case}.

\begin{proofof}{Lemma \ref{CFL-poly-case}}
Hereafter, we intend to verify that $\logcfl/\poly = \mathrm{LOG}/\poly(\cfl/\poly)$. Notice that the deterministic case is similarly proven. Since $\cfl\subseteq \cfl/\poly$, we obtain $\logcfl/\poly \subseteq \mathrm{LOG}/\poly(\cfl/\poly)$.
For the other inclusion, let $L$ denote any language in $\mathrm{LOG}/\poly(\cfl/\poly)$.
Take an advised-$\dl$-m-reduction function $f$, a polynomially-bounded advice function $h$, and an advised-1npda $M$ such that, for any string $x$, $M$ starts with input string $x$ on an input tape and advice string $h(|x|)$ on an advice tape, and $x\in L$ holds exactly when $M$ accepts $(f(x),h(|f(x)|))$.

Since $f\in\fl/\poly$, $f$ is polynomially bounded, and thus there exists a polynomial $p_1$ satisfying $|f(x)|\leq p_1(|x|)$ for all $x$.
By Lemma \ref{FL-advice}, there exists a function $g\in\fl$ and a polynomially-bounded advice function $\ell$ satisfying $f=g[\ell]$.
Take another polynomial $p_2$ for which $\ell(n)\leq p_2(n)$ for all $n$.
In a way similar to the proof of Lemma \ref{CFL-robust}, we set $\hat{h}(n) = h(0)\natural h(1)\natural \cdots \natural h(p_2(n))$ for each $n$. We define $B$ as $\{\track{x}{u} \mid u=\track{s}{w}\wedge \text{ $M$ accepts $(g(\track{x}{s}),w_{|g(\track{x}{s})|})$ } \}$, where $w$ is of the form $w_0\natural w_1\natural \cdots \natural w_m$. Note that $B\in\logcfl$.
We define another advice function $k$ as $k(n)=\track{\ell(n)}{\hat{h}(n)}$ for any $n$. It thus follows that $\track{x}{k(|x|)}\in B$ iff $M$ accepts $(g(\track{x}{\ell(|x|)}),h(|f(x)|))$. Hence, we obtain $L=\{x\mid \track{x}{k(|x|)}\in B\} = B[k]$. By Lemma \ref{reduction-advice-LOGCFL}, this implies that $L$ is in $\logcfl/\poly$.
\end{proofof}

The characterizations given in Lemmas \ref{CFL-robust} and \ref{CFL-poly-case} leave unstated the use of another plausible complexity class $\mathrm{LOG}(\cfl/\poly)$. This is because it is unclear at present that  $\mathrm{LOG}(\cfl/\poly)$ coincides with $\logcfl/\poly$. See Section \ref{sec:discussion} for a more discussion.

Concerning $\dl/\poly$ and $\nl/\poly$, it is known that $\nl/\poly \subseteq \dl/\poly$ iff $\nl\subseteq \dl/\poly$ (see, e.g., \cite{Yam19a}). A similar equivalence also holds for $\logcfl/\poly$ and $\logcfl/\poly$, as shown in the next lemma. This fact will be used in Section \ref{sec:proof-proposition}.

\begin{lemma}\label{LOGCFL-remove}
$\logcfl/\poly \subseteq \logdcfl/\poly$ if and only if $\logcfl\subseteq \logdcfl/\poly$.
\end{lemma}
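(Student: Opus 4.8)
The plan is to dispatch the forward implication trivially and to spend the real effort on the converse, where two layers of advice must be merged into one. For the forward direction, observe that $\fl\subseteq\fl/\poly$ (an $\fl$-reduction is an advised one that simply ignores its advice), so $\logcfl=\mathrm{LOG}(\cfl)\subseteq\mathrm{LOG}/\poly(\cfl)=\logcfl/\poly$. Hence, assuming $\logcfl/\poly\subseteq\logdcfl/\poly$, we immediately obtain $\logcfl\subseteq\logcfl/\poly\subseteq\logdcfl/\poly$.

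For the converse, assume $\logcfl\subseteq\logdcfl/\poly$ and fix an arbitrary $L\in\logcfl/\poly$. By Lemma \ref{reduction-advice-LOGCFL} I would first write $L=K[h]$ for a polynomially-bounded advice function $h$ and some $K\in\logcfl$. The hypothesis then places $K$ in $\logdcfl/\poly$, so a second application of Lemma \ref{reduction-advice-LOGCFL} (its deterministic clause) yields $K=K'[h']$ for a polynomially-bounded advice function $h'$ and some $K'\in\logdcfl$. Unfolding the two definitions gives, for every $x$ with $n=|x|$,
\[
x\in L \;\Longleftrightarrow\; \track{x}{h(n)}\in K \;\Longleftrightarrow\; \track{\track{x}{h(n)}}{h'(N(n))}\in K',
\]
where $N(n)=|\track{x}{h(n)}|=\max\{n,|h(n)|\}$ depends only on $n$.

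The crucial observation is that $h'(N(n))$ is itself a function of $n$ alone, so I can fold both advice strings into a single polynomially-bounded advice function $H(n)=\track{h(n)}{h'(N(n))}$; both components are polynomially bounded in $n$, the second because $N$ is polynomially bounded and $h'$ is polynomially bounded. It then remains to exhibit a language $K''\in\logdcfl$ with $L=K''[H]$. Writing $K'=\{y\mid f'(y)\in A'\}$ for some $f'\in\fl$ and $A'\in\dcfl$, I would prepend to $f'$ a logarithmic-space track-reformatting map $r$ sending $\track{x}{\track{u}{v}}$ to $\track{\track{x}{u}}{v}$ (merely rearranging tracks and handling the $\#$-padding), and set $K''=\{z\mid f'(r(z))\in A'\}$. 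Since the composition of two $\fl$-functions is again in $\fl$, we get $K''\in\logdcfl$, and by construction $\track{x}{H(n)}\in K''$ iff $\track{\track{x}{h(n)}}{h'(N(n))}\in K'$ iff $x\in L$, that is, $L=K''[H]$. Lemma \ref{reduction-advice-LOGCFL} then certifies $L\in\logdcfl/\poly$.

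The routine parts are verifying that $r$ runs in logarithmic space and that the composed advice stays polynomially bounded. The one point that genuinely needs care -- and where a careless argument would break -- is the dependence of the inner advice $h'$ on the derived length $N(n)$ rather than on $n$ itself; the whole construction works \emph{only} because $N(n)$ is determined by $n$, which is exactly what allows $h'(N(n))$ to be precomputed into $H$. I would therefore make this length-bookkeeping explicit rather than leave it implicit.
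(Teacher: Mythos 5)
Your proof is correct and follows essentially the same route as the paper's: the forward direction is dispatched via $\logcfl\subseteq\logcfl/\poly$, and the converse applies Lemma \ref{reduction-advice-LOGCFL} twice and folds the two advice layers into a single track-paired advice function $k(n)=\track{h(n)}{g(n)}$, with a track-rearranging log-space step certifying membership in $\logdcfl/\poly$. If anything, your explicit bookkeeping of the derived length $N(n)=|\track{x}{h(n)}|$ for the inner advice is slightly more careful than the paper's rendering, which evaluates the inner advice at $|x|$ where strictly it should be evaluated at the length of the once-padded string $\track{x}{h(|x|)}$.
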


\begin{yproof}
The implication from left to right is trivial since $\logcfl$ is properly included in $\logcfl/\poly$. Conversely, assume that $\logcfl\subseteq \logdcfl/\poly$. Let $L$ denote any language over alphabet $\Sigma$ in $\logcfl/\poly$.  By Lemma \ref{reduction-advice-LOGCFL}, there are an advice alphabet $\Theta$, a polynomially-bounded advice function $h:\nat\to\Theta^*$, and a language $K\in\logcfl$ over the compound alphabet $[\Sigma,\Theta,\#]$ satisfying $L=K[h]$. For readability, we write $\hat{\Theta}_{\Sigma}$ for $[\Sigma,\Theta,\#]$.  Our assumption then yields $K\in\logdcfl/\poly$. We then apply Lemma \ref{reduction-advice-LOGCFL} for $\dcfl$ and obtain an advice alphabet $\Gamma$, a polynomially-bounded advice function $g:\nat\to\Gamma^*$, and a language $A\in\logdcfl$ over the alphabet $[\hat{\Theta}_{\Sigma},\Gamma,\#]$ satisfying $K=A[g]$. As a new advice function $k$, we set  $k(n)=\track{h(n)}{g(n)}$ for any $n\in\nat$  and define $B=\{\track{x}{z}\mid z=\track{x}{h(|x|)} \wedge \track{z}{g(|x|)}\in A\}$ by treating $z$ as a string over $\hat{\Theta}_{\Sigma}$. It then follows that $L=\{x\mid \track{x}{k(|x|)}\in B\}$. This obviously implies that $L\in\logdcfl/\poly$.
\end{yproof}

\subsection{Two-Way Auxiliary Pushdown Automata}\label{sec:two-way-aux}

With the use of reductions in $\fl/\poly$, we have introduced the advised complexity classes $\logcfl/\poly$ and $\logdcfl/\poly$ in Section \ref{sec:advice-extension}. Here, we provide another characterization of them \emph{with no use} of advised $\dl$-m-reduction. This will be quite useful in the proof of our main theorem (Theorem \ref{character-twoway}) in Section \ref{sec:proof-twoway}.
Recall that
Sudborough \cite{Sud78} characterized $\logcfl$ (as well as $\logdcfl$) in terms of Cook's auxiliary pushdown automata \cite{Coo71}. A \emph{two-way nondeterministic auxiliary pushdown automaton} (or an aux-2npda, for short) is an extension of a 2npda by attaching an additional two-way rewritable work tape.
As demonstrated in \cite{Sud78}, a language $L$ belongs to $\logcfl$ iff there exists an aux-2npda that recognizes $L$ in polynomial time using logarithmic work space.
We further expand such an aux-2npda by augmenting Karp-Lipton style advice as follows. For the sake of convenience, we call such a machine an \emph{advised-aux-2npda}. An advised-aux-2npda uses an extra read-only tape called an \emph{advice tape} on which an advice string is written. Note that all tape heads of the advised-aux-2npda can move in two directions.

Given an advised-aux-2npda $M$, an advice function $h$, and a language $L$, we say that $M$ \emph{recognizes $L$ with the help of} $h$ if (i) $M$ takes standard input $x$ and advice string $h(|x|)$ and (ii) for any $x$, if $x\in L$, then $M$ accepts, and otherwise, $M$ rejects.

The following is an advice version of Sudborough's characterization of $\logcfl$ in terms of aux-2npda's.

\begin{lemma}\label{character-LOG-CFL}
Let $L$ be any language. The following statements are logically equivalent.
\renewcommand{\labelitemi}{$\circ$}
\begin{enumerate}\vs{-2}
  \setlength{\topsep}{-2mm}%
  \setlength{\itemsep}{1mm}%
  \setlength{\parskip}{0cm}%

\item $L\in\logcfl/\poly$.

\item There exist a polynomial-time, logarithmic-space advised-aux-2npda $M$ and a polynomially-bounded advice function $h$ such that $M$ recognizes $L$ with the help of $h$.
\end{enumerate}\vs{-2}
The same statements hold for $\logdcfl/\poly$ as well.
\end{lemma}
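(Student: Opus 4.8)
The plan is to reduce the lemma to two facts already in hand: the reduction-based characterization of $\logcfl/\poly$ in Lemma \ref{reduction-advice-LOGCFL}, and Sudborough's advice-free characterization of $\logcfl$ by polynomial-time, logarithmic-space aux-2npda's \cite{Sud78}. For the direction from (1) to (2), I would start from $L\in\logcfl/\poly$ and invoke Lemma \ref{reduction-advice-LOGCFL} to write $L=K[h]$ for a polynomially-bounded advice function $h$ and a language $K\in\logcfl$. Sudborough's theorem then supplies an aux-2npda $N$ that recognizes $K$ in polynomial time using $O(\log n)$ work space. The advised-aux-2npda $M$ we want simply runs $N$ on the compound string $\track{x}{h(|x|)}$: we place $x$ on the input tape and $h(|x|)$ on the advice tape, and, using the fact that every head of $M$ is two-way, $M$ emulates each access of $N$ to the $i$-th compound symbol by reading $x_{(i)}$ on the input tape and $h(|x|)_{(i)}$ on the advice tape (supplying $\#$ on either track once the corresponding head passes its endmarker). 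Since $h$ is polynomially bounded, $|h(|x|)|\le p(|x|)$, so the simulated compound input has polynomial length and $M$ inherits the polynomial time bound and logarithmic work space of $N$, up to $O(\log n)$ bits of bookkeeping.

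For the converse, from an advised-aux-2npda $M$ together with a polynomially-bounded advice $h$ recognizing $L$, I would define the auxiliary language
\[
K=\{\track{x}{w}\mid M \text{ accepts } x \text{ with } w \text{ on its advice tape}\},
\]
so that $L=K[h]$ holds by construction. It then remains only to verify $K\in\logcfl$, after which Lemma \ref{reduction-advice-LOGCFL} immediately yields $L\in\logcfl/\poly$. To see $K\in\logcfl$, I would describe an ordinary aux-2npda $N$ that, given $\track{x}{w}$ on its single two-way input tape, simulates $M$: since $M$ operates on two separate two-way tapes (input and advice) while $N$ has only the compound tape, $N$ maintains two $O(\log n)$-bit counters holding the current positions of $M$'s input head and advice head. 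To emulate one step of $M$, the machine $N$ moves to the stored input-head position and reads the top track, then moves to the stored advice-head position and reads the bottom track, applies $M$'s transition (mirroring $M$'s work tape and stack verbatim), and updates the counters; the $\#$-padding of the track notation lets $N$ detect the ends of $x$ and of $w$. Each simulated step costs $O(n)$ head moves, so $N$ runs in polynomial time and logarithmic space, giving $K\in\logcfl$.

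The deterministic case is handled identically, replacing 1npda's and aux-2npda's by their deterministic counterparts and using Sudborough's characterization of $\logdcfl$ via aux-2dpda's; the determinism of $M$ transfers directly to $N$ and back, since neither the head-synchronization nor the counter updates introduce nondeterminism.

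The step I expect to require the most care is the head-synchronization in both directions: in (1)$\Rightarrow$(2), correctly realigning $N$'s single access to the $i$-th compound symbol against the two separate tapes of $M$ when $|x|\ne|h(|x|)|$ and the track notation inserts $\#$'s; and in (2)$\Rightarrow$(1), faithfully simulating the two independent two-way heads of $M$ with the single two-way head of $N$ via the two position counters while keeping the total running time polynomial. Neither is conceptually deep, but the \emph{bookkeeping} around the $\#$-padding and the length discrepancy between $x$ and its advice is precisely the place where an informal argument could slip, so I would make those details explicit.
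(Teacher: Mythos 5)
Your proof is correct and follows essentially the same route as the paper's: both directions combine Lemma \ref{reduction-advice-LOGCFL} with Sudborough's aux-2npda characterization of $\logcfl$, passing between the two-tape advised machine and a one-tape machine on the compound input $\track{x}{h(|x|)}$ by splitting/merging tracks with $O(\log n)$ bits of head-position bookkeeping. The only difference is that you make explicit the synchronization details (the two position counters and the $\#$-padding) that the paper treats in a sentence.
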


\begin{yproof}
(1 $\Rightarrow$ 2)
This follows from \cite{Sud78}, in which $\logcfl$ is characterized by logarithmic-space aux-2npda's running in polynomial time. Let $K$ denote any language in $\logcfl/\poly$. By Lemma \ref{reduction-advice-LOGCFL}, there exist a language $K$ in $\logcfl$ and an advice function $h$ for which $L=K[h]$. Since $K\in\logcfl$, by \cite{Sud78}, there exists an aux-2npda $M$ that recognizes $K$ in polynomial time using logarithmic work space.
We thus conclude that, if $x\in L$, then $M$ accepts $\track{x}{h(|x|)}$, and otherwise, $M$ rejects $\track{x}{h(|x|)}$.
Notice that $M$ uses a single input tape, which is made up of two tracks. We split these two tracks of the input tape of $M$ into two separate tapes, one of which is an advice tape for an advice string. We denote by $N$ the obtained machine. Clearly, $N$ is an advised-aux-2npda and recognizes $L$ with the help of $h$.

(2 $\Rightarrow$ 1)
Assume that $L$ is recognized by a certain advised-aux-2npda $M$ with a polynomially-bounded advice function $h$ in polynomial time using logarithmic space.
Recall that $M$ has both an input tape and an advice tape with two separate tape heads along them
other than an auxiliary tape as well as a stack.
Since all tape heads of $M$ freely move in two directions, it is possible to treat an input tape and an advice tape as two tracks of a single input tape equipped with a single tape head.  It is important to note that this modification requires additional $O(\log|x|)$ memory bits to remember the locations of the two tape heads of the input and the advice tapes. We therefore obtain a new aux-2npda $N'$ that takes an input of the form $\track{x}{h(|x|)}$ and simulates $M$ on the pair $(x,h(|x|))$ of input strings.
We define $f(x)= \track{x}{h(|x|)}$ for any $x$ and set $K=\{z\mid \text{ $N'$ accepts $z$ }\}$. By the characterization of \cite{Sud78}, $K$ belongs to $\logcfl$. Since $L=K[h]$, Lemma \ref{reduction-advice-LOGCFL} concludes that $L$ is in $\logcfl/\poly$.
\end{yproof}

\subsection{Families of Promise Problems and Stack-State Complexity Classes}\label{sec:promise-problem}

Given an alphabet $\Sigma$, a \emph{promise decision problem} over $\Sigma$ is a pair $(A,B)$ of sets satisfying that $A,B\subseteq\Sigma^*$ and $A\cap B=\setempty$, where $A$ is viewed as a set of ``positive'' instances and $B$ represents a set of ``negative'' instances of the promise decision problem.
Naturally, we expand a single promise decision problem to a ``family'' of promise decision problems over a single alphabet.
Fix an alphabet $\Sigma$ and let $\LL=\{(L^{(+)}_n,L^{(-)}_n)\}_{n\in\nat}$ denote such a family of promise decision problems over $\Sigma$. It is important to remark that $\Sigma$ does not depend on the choice of $n$ (see, e.g., \cite{Yam18,Yam19a,Yam19b}).
All strings in $\bigcup_{n\in\nat}(L^{(+)}_n\cup L^{(-)}_n)$ are distinguished as  \emph{valid} or \emph{promised} strings.
At this moment, we demand neither $L^{(+)}_n\cap L^{(+)}_m = \setempty$ nor $L^{(-)}_n\cap L^{(-)}_m = \setempty$ for any distinct pair $m,n\in\nat$.
We say that $\LL$ has a \emph{polynomial ceiling} if there exists a polynomial $r$ satisfying $L^{(+)}_n\cup L^{(-)}_n\subseteq \Sigma^{\leq r(n)}$ for all indices $n\in\nat$.

To solve a family $\LL=\{(L^{(+)}_n,L^{(-)}_n)\}_{n\in\nat}$ of promise decision problems, we use a ``family'' of pushdown automata. Such a family is expressed as $\MM=\{M_n\}_{n\in\nat}$, where  each $M_n$ has the form $(Q_n,\Sigma,{\{\vdash,\dashv\}}, \Gamma_n,\delta_n, q_0,\bot, Q_{acc,n},Q_{rej,n})$ with $\delta_n: (Q_n-Q_{halt,n})\times \check{\Sigma}_{\lambda}\times\Gamma\to \PP(Q_n\times \Gamma^{\leq e_n})$, where $Q_{halt,n} = Q_{acc,n}\cup Q_{rej,n}$.
This machine family $\MM$ is said to \emph{solve} $\LL$ if, for any index $n\in\nat$, (i) for all $x\in L^{(+)}_n$, $M_n$ accepts $x$ and (ii) for all $x\in L^{(-)}_n$, $M_n$ rejects $x$. For any other string outside of $L^{(+)}_n\cup L^{(-)}_n$, $M_n$ may possibly neither accept nor reject it.
moreover, we do not demand any ``uniformity'' of $\MM$, that is, any existence of a fixed algorithmic procedure that generates from $1^n$ the description of each machine $M_n$.

In Section \ref{sec:introduction}, we have already discussed families of 2nfa's and 2dfa's of polynomial state complexities.
In stark contrast to the state complexity of 2nfa's and 2dfa's, we need to consider the \emph{stack-state complexity} of 2npda's and 2dpda's because we can reduce the number of inner states of pushdown automata at will by increasing their stack alphabet size.
In this work,  we are interested in families of 2npda's and 2dpda's having polynomial stack-state complexities. Analogously to the nonuniform classes $\twon$ and $\twod$, we introduce two complexity classes, $\twonpd$ and $\twodpd$, where the suffix ``PD'' stands for ``pushdown''.

\begin{definition}
The nonuniform stack-state complexity class $\twonpd$ is composed of all nonuniform families of promise decision problems solvable by appropriate families of 2npda's whose stack-state complexities are bounded from above by a fixed polynomial. Moreover, $\twonpd/\poly$ consists of all families in $\twonpd$ that have polynomial ceilings. In a similar manner, we define $\twodpd$ using 2dpda's instead of 2npda's.
\end{definition}

If we use 1npda's and 1dpda's in place of 2npda's and 2dpda's, we analogously obtain $\onenpd$ and $\onedpd$, respectively. These nonuniform complexity classes will be extensively discussed in Section \ref{sec:proof-oneway}.

\section{Parameterizations of LOGCFL/poly and LOGDCFL/poly}\label{sec:define-LOGCFL}

Toward the main goal of this work, we intend to parameterize the complexity classes $\logcfl$ and $\logdcfl$ as well as their advised  counterparts $\logcfl/\poly$ and $\logdcfl/\poly$.

\subsection{Parameterized Complexity Classes}

Similarly to $\para\dl/\poly$ and $\para\nl/\poly$ defined in \cite{Yam19a,Yam19b}, we wish to seek proper ``parameterizations'' of $\logcfl/\poly$ and $\logdcfl/\poly$, including $\logcfl$ and $\logdcfl$ as their special cases.
For readability, we will follow the basic terminology used in \cite{Yam17a,Yam17b,Yam18,Yam19a,Yam19b}. A \emph{parameterized decision problem} is a pair $(L,m)$ of a language $L$ and a size parameter $m$. We are particularly interested in size parameters computable using logarithmic space.
A \emph{log-space size parameter} $m$ is a function from $\Sigma^*$ to $\nat$ for a given alphabet $\Sigma$ for which its associated function mapping a string $x\in\Sigma^*$ to the string of the form $1^{m(x)}\in\{1\}^*$ belongs to $\fl$ \cite{Yam17a}; that is, there is a DTM $M$ (equipped with a read-only input tape, a rewritable work tape, and a write-once output tape) such that, for any string $x\in\Sigma^*$, $M$ takes $x$ as an input and produces $1^{m(x)}$ on its output tape in $|x|^{O(1)}$ time using $O(\log|x|)$ work space.

How can we define  $\para\logcfl/\poly$ and $\para\logdcfl/\poly$ in a reasonable and systematic way? Since $\logcfl/\poly$ and $\logdcfl/\poly$ are defined in terms of $\fl/\poly$-functions in Definition \ref{def-advice-class}, we first need to look for a natural parameterization of $\fl/\poly$-functions. Parameterizations of logarithmic-space computation was also discussed in, e.g., \cite{CM13,EST12}.
Given a function $f:\Sigma^*\to\Gamma^*$ for two alphabets $\Sigma$ and $\Gamma$ and a size parameter $m$, the pair $(f,m)$ belongs to $\para\fl/\poly$  if $m$ is logarithmic-space computable and there exists an advised DTM $M$, an advice function $h:\nat\to\Theta^*$ for an advice alphabet $\Theta$, and a polynomial $p$ such that $M$ takes an input string $x$ on its input tape and an advice string $h(|x|)$ on its advice tape, and $M$ produces $f(x)$ on its output tape within time $p(m(x)|x|)$ using space $O(\log{m(x)|x|})$, provided that $h$ satisfies $|h(|x|)|\leq p(m(x)|x|)$.
With the use of $\para\fl/\poly$,  we  introduce $\para\mathrm{LOG}/\poly(\CC)$ for each language family $\CC$ as the collection of all parameterized problems $(L,m)$ such that there exist a function $f$ and a language $A\in\CC$ satisfying: (i) $(f,m)\in \para\fl/\poly$ and (ii) $L=\{x \mid f(x)\in A\}$.
When we do not use advice, we naturally obtain $\para\fl$ as a special case of $\para\fl/\poly$.

\begin{definition}
The parameterized complexity classes $\para\logcfl/\poly$ and $\para\logdcfl/\poly$ are defined as $\para\mathrm{LOG}/\poly(\cfl)$ and $\para\mathrm{LOG}/\poly(\dcfl)$, respectively.
\end{definition}


Is there any a close relationship between $\para\logcfl/\poly$ and appropriately-parameterized advised-aux-2npda's?
We answer this question by proving the following characterization lemma.

\begin{lemma}\label{paraLOGCFL-character}
Let $(L,m)$ be any parameterized problem. The following three statements are logically equivalent.
\renewcommand{\labelitemi}{$\circ$}
\begin{enumerate}\vs{-2}
  \setlength{\topsep}{-2mm}%
  \setlength{\itemsep}{1mm}%
  \setlength{\parskip}{0cm}%

\item $(L,m)\in \para\logcfl/\poly$.

\item There exists an advice function $h$ and an  advised-aux-2npda $M$ such that $M$ takes any input $x$ and an advice string $h(|x|)$ of length $(m(x)|x|)^{O(1)}$ and then correctly determines whether or not $x\in L$ in time $(m(x)|x|)^{O(1)}$ using space $O(\log{(m(x)|x|)})$, where $x$ is a ``symbolic'' input.

\item There exist an aux-2npda $M$ (with no advice tape) and an advice function $h$ such that $L=L(M)[h]$, where each advice string $h(|x|)$ has length $(m(x)|x|)^{O(1)}$ and $M$ runs in time $(m(x)|x|)^{O(1)}$ using space $O(\log{(m(x)|x|)})$, where $x$ is a ``symbolic'' input.
\end{enumerate}\vs{-2}
The same statements hold for $\para\logdcfl/\poly$ and advised-aux-2dpda's.
\end{lemma}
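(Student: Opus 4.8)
The plan is to prove Lemma \ref{paraLOGCFL-character} by establishing the cycle of implications $(1)\Rightarrow(2)\Rightarrow(3)\Rightarrow(1)$, closely following the (non-parameterized) argument of Lemma \ref{character-LOG-CFL} but carrying the size parameter $m$ through all the resource bounds. The essential new feature compared to the earlier lemma is that every ``polynomial'' bound is replaced by a polynomial in the combined quantity $m(x)|x|$, and I must verify that the standard track-merging and track-splitting constructions respect these parameterized bounds. Throughout, I treat $x$ as a symbolic input, meaning the resource bounds are stated uniformly in $x$ via the function $m$.

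For $(1)\Rightarrow(2)$: assume $(L,m)\in\para\logcfl/\poly = \para\mathrm{LOG}/\poly(\cfl)$. By definition there are a function $f$ with $(f,m)\in\para\fl/\poly$ and a language $A\in\cfl$ with $L=\{x\mid f(x)\in A\}$. The parameterized version of Lemma \ref{FL-advice} (which I would either invoke or re-derive by the same argument with all bounds read as $(m(x)|x|)^{O(1)}$) lets me write $f=g[\ell]$ for some $g\in\para\fl$ and a polynomially-bounded advice function $\ell$ with $|\ell(|x|)|\le (m(x)|x|)^{O(1)}$. Since $A\in\cfl$, Sudborough's characterization \cite{Sud78} supplies a polynomial-time logarithmic-space aux-2npda recognizing $A$; composing it with the logarithmic-space computation of $g$ (the standard composition preserving log-space and polynomial time, now with the time bound $(m(x)|x|)^{O(1)}$ and space $O(\log(m(x)|x|))$) yields an advised-aux-2npda $M$ that, given $x$ on its input tape and $\ell(|x|)$ (the advice $h$) on its advice tape, decides membership of $x$ in $L$ within the stated parameterized bounds. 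This is where I must be careful: the composition must not recompute $g(x)$ in full but rather produce its output symbols on demand, keeping the work-tape space logarithmic in $m(x)|x|$.

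For $(2)\Rightarrow(3)$ and $(3)\Rightarrow(1)$: these mirror the two-tracks-versus-two-tapes conversion in Lemma \ref{character-LOG-CFL}. Going from $(2)$ to $(3)$, I merge the separate input and advice tapes of $M$ into a single two-track input tape carrying $\track{x}{h(|x|)}$; because all heads move two-way, this merge costs only $O(\log(m(x)|x|))$ extra work space to remember the two head positions, so the resulting advice-free aux-2npda $M'$ runs in time $(m(x)|x|)^{O(1)}$ and space $O(\log(m(x)|x|))$ with $L=L(M')[h]$. Going from $(3)$ to $(1)$, I set $f(x)=\track{x}{h(|x|)}$ and observe $(f,m)\in\para\fl/\poly$ (the map appends the advice, which is polynomially bounded in $m(x)|x|$ and log-space computable), while $K=L(M')\in\para\logcfl$ by Sudborough's characterization applied with the parameterized resource bounds; then $L=\{x\mid f(x)\in K\}$ exhibits $(L,m)\in\para\mathrm{LOG}/\poly(\cfl)=\para\logcfl/\poly$. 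Finally, the deterministic case for $\para\logdcfl/\poly$ and advised-aux-2dpda's follows verbatim, since none of the constructions disturb determinism.

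The main obstacle I expect is bookkeeping the resource bounds in the parameter $m$ rather than any conceptual difficulty: specifically, verifying in $(1)\Rightarrow(2)$ that the Sudborough simulation of the \emph{composed} reduction stays within time $(m(x)|x|)^{O(1)}$ and space $O(\log(m(x)|x|))$, given that the intermediate string $g(\track{x}{\ell(|x|)})$ can itself have length $(m(x)|x|)^{O(1)}$. The standard trick is that the aux-2npda never stores this intermediate string explicitly but regenerates each of its symbols via the log-space machine for $g$ whenever the pushdown simulation needs it; I would make the time accounting explicit here, noting that each symbol regeneration costs $(m(x)|x|)^{O(1)}$ time and the number of symbol accesses is itself $(m(x)|x|)^{O(1)}$, keeping the total polynomial in $m(x)|x|$. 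Apart from this one point, the proof is a routine parameterized refinement of Lemmas \ref{FL-advice}, \ref{reduction-advice-LOGCFL}, and \ref{character-LOG-CFL}.
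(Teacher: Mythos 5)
Your proposal is correct and follows essentially the same route as the paper's own proof: the cycle $(1)\Rightarrow(2)\Rightarrow(3)\Rightarrow(1)$, the on-demand regeneration of the symbols of $f(x)$ by the logarithmic-space machine so that the pushdown simulation never stores the intermediate string (the paper's aux-2npda $F$ does exactly this, recomputing each requested symbol and charging $(m(x)|x|)^{O(1)}$ time per access), the two-track merging of input and advice tapes with $O(\log(m(x)|x|))$ extra space for head positions, and Sudborough's characterization for $(3)\Rightarrow(1)$. The only deviations are cosmetic: the paper composes with a plain 1npda for $A\in\cfl$ rather than an aux-2npda, uses the advised DTM from the definition of $\para\fl/\poly$ directly instead of first decomposing $f=g[\ell]$, and in $(3)\Rightarrow(1)$ it spells out the one step you compress, namely unfolding $K\in\logcfl$ into an $\fl$-reduction to a language in $\cfl$ and verifying that the composed map $x\mapsto f'(\track{x}{h(|x|)})$ together with $m$ lies in $\para\fl/\poly$, since $\para\mathrm{LOG}/\poly(\cfl)$ requires the target language itself to be context-free.
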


\begin{yproof}
We note that the following argument works also for $\para\logcfl/\poly$ and advised-aux-2dpda's.

(1 $\Rightarrow$ 2) Assume that $(L,m)\in \para\logcfl/\poly$. Take a parameterized function $(f,m)\in \para\fl/\poly$ and a language $A\in\cfl$ satisfying $L=\{x\mid f(x)\in A\}$. We choose an advice function $h$ and a DTM $M$ such that $M$ takes input string $x$ on an input tape and advice string $h(|x|)$ of length $(m(x)|x|)^{O(1)}$ on an advice tape and produces $f(x)$ in time $(m(x)|x|)^{O(1)}$ using space $O(\log{(m(x)|x|)})$. We also take a 1npda $N$ recognizing $A$.

We wish to construct an aux-2npda $F$ as follows. On input $w$ of the form $\track{x}{h(|x|)}$, we simulate $N$ step by step.
For any $i\in[|w|]$, whenever $N$ tries to scan  the $i$th tape cell of $N$'s input tape, we first run $M$ on $w$ to compute the $(i+1)$th symbol of $f(x)$, and then we simulate one step of $N$'s move. The execution time of $F$ is obviously $(m(x)|x|)^{O(1)}$ since so is $M$. The space usage of $F$ is $O(\log{(m(x)|x|)})$. Since $F$ correctly solves $L$, (2) is true.

(2 $\Rightarrow$ 3) Take $(h,M)$ given in (2). We combine an input tape and an advice tape of $M$ into a single tape made up of two separate tracks.
To simulate the two tape heads of the original tapes of $M$ by a single tape head, we need to use extra $O(\log(m(x)|x|))$ bits to remember the head locations. We denote by $N$ the resulting machine.
By the definition of $N$, it follows that $L(M)$ coincides with $L(N)[h]$.

(3 $\Rightarrow$ 1) Take an advice function $h$ and an aux-2npda $M$ given in (2).
Write $K$ for $L(M)$. Note that $L=\{x\mid \track{x}{h(|x|)}\in K\}$. By Sudborough's characterization \cite{Sud78}, $K$ belongs to $\logcfl$ ($=\mathrm{LOG}(\cfl)$). Take a function $f\in\fl$ and a language $A\in\cfl$ for which $K=\{z\mid f(z)\in A\}$. We then define $g(x)=f(\track{x}{h(|x|)})$ for any $x$.
It suffices to verify that $(g,m)$ is in $\para\fl/\poly$. Since length $|h(|x|)|$ is $O(m(x)|x|)^{O(1)}$, $f(\track{x}{h(|x|)})$ can be computed in time $(m(x)|x|)^{O(1)}$ and space $O(\log{(m(x)|x|)})$. Thus, $(g,m)$ belongs to $\para\fl/\poly$.
\end{yproof}

\subsection{Polynomially-Honest Size Parameters}

A log-space size parameter $m$ is, by definition, polynomially bounded but not necessarily polynomially honest.
Here, we wish to pay special attention to polynomially-honest log-space size parameters and their associated parameterized decision problems.
For convenience, we use the notation $\phsp$ to mean the collection of all parameterized decision problems whose size parameters are polynomially honest (but not necessarily logarithmic-space computable).

Concerning the role of $\phsp$, let us demonstrate the following property, stated as Proposition  \ref{para-LOGCFL}, which will turn out to be crucial in proving a key proposition (Proposition \ref{key-prop}) in Section \ref{sec:proof-proposition}. This property bridges between $\logcfl/\poly$ (resp., $\logdcfl/\poly$) and its parameterization $\para\logcfl/\poly$ (resp., $\para\logdcfl/\poly$) if its associated log-space size parameters are restricted to be polynomially-honest.

\begin{proposition}\label{para-LOGCFL}
$\para\logcfl /\poly \cap \phsp \subseteq \para\logdcfl /\poly$ if and only if $\logcfl /\poly \subseteq \logdcfl /\poly$.
\end{proposition}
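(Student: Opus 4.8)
The plan is to exploit a single structural fact: a size parameter that lies simultaneously in $\para\logcfl/\poly$ and in $\phsp$ is \emph{polynomially equivalent} to the binary length. Indeed, membership of $(L,m)$ in $\para\logcfl/\poly$ forces $m$ to be logarithmic-space computable, hence polynomially bounded, so $m(x)\leq q(|x|)$ for some polynomial $q$; membership in $\phsp$ forces $m$ to be polynomially honest, so $|x|\leq p(m(x))$ for some polynomial $p$. Combining the two inequalities yields $(m(x)|x|)^{O(1)}=|x|^{O(1)}$ and $O(\log(m(x)|x|))=O(\log|x|)$, so the parameterized resource bounds in the characterization Lemma~\ref{paraLOGCFL-character} degenerate exactly into the unparameterized bounds of Lemma~\ref{character-LOG-CFL}. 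These two characterization lemmas are the only bridges I need between the parameterized and the plain worlds.

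For the ($\Leftarrow$) direction, I would take an arbitrary $(L,m)\in\para\logcfl/\poly\cap\phsp$ and invoke the polynomial-equivalence observation above. By Lemma~\ref{paraLOGCFL-character} there is an advised-aux-2npda deciding $L$ in time $(m(x)|x|)^{O(1)}$ and space $O(\log(m(x)|x|))$ with polynomially-bounded advice; rewriting both bounds as $|x|^{O(1)}$ and $O(\log|x|)$ and applying Lemma~\ref{character-LOG-CFL} gives $L\in\logcfl/\poly$. The hypothesis then yields $L\in\logdcfl/\poly$, so by the deterministic instance of Lemma~\ref{character-LOG-CFL} there is a polynomial-time logarithmic-space advised-aux-2dpda recognizing $L$. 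Finally, since $m(x)\geq 1$ gives $|x|^{O(1)}\leq(m(x)|x|)^{O(1)}$ and $\log|x|\leq\log(m(x)|x|)$, this deterministic machine trivially respects the parameterized bounds demanded by Lemma~\ref{paraLOGCFL-character} for the deterministic case, so $(L,m)\in\para\logdcfl/\poly$.

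For the ($\Rightarrow$) direction, I would start from an arbitrary $L\in\logcfl/\poly$ and attach the binary-length parameter $m_{bin}(x)=|x|$. Since $1^{m_{bin}(x)}=1^{|x|}$ is producible in logarithmic space, $m_{bin}$ is a log-space size parameter, and since $|x|\leq m_{bin}(x)$ it is polynomially honest, so $(L,m_{bin})\in\phsp$. Using Lemma~\ref{character-LOG-CFL} to obtain a polynomial-time logarithmic-space advised-aux-2npda for $L$ and reading it through Lemma~\ref{paraLOGCFL-character} with $m=m_{bin}$ (whose bounds $(m_{bin}(x)|x|)^{O(1)}=|x|^{O(1)}$ and $O(\log(m_{bin}(x)|x|))=O(\log|x|)$ exactly match), I get $(L,m_{bin})\in\para\logcfl/\poly$, hence $(L,m_{bin})\in\para\logcfl/\poly\cap\phsp$. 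The hypothesis then places $(L,m_{bin})$ in $\para\logdcfl/\poly$; feeding the resulting advised-aux-2dpda back through Lemma~\ref{paraLOGCFL-character} and then the deterministic instance of Lemma~\ref{character-LOG-CFL}, and again using $(m_{bin}(x)|x|)^{O(1)}=|x|^{O(1)}$, yields $L\in\logdcfl/\poly$.

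The main obstacle I anticipate is not any single step but the careful bookkeeping that the chain of translations Lemma~\ref{character-LOG-CFL} $\to$ Lemma~\ref{paraLOGCFL-character} $\to$ hypothesis $\to$ Lemma~\ref{paraLOGCFL-character} $\to$ Lemma~\ref{character-LOG-CFL} preserves the polynomial advice-length bound $|h(|x|)|\leq(m(x)|x|)^{O(1)}$ and does not inadvertently inflate the running time when the aux-2npda simulation is re-expressed in the unparameterized characterization. Verifying the degeneracy of the resource bounds, and in particular handling the corner case $m(x)=0$ (which I would rule out by adopting the standard convention $m(x)\geq 1$, or equivalently by working with $m(x)|x|+|x|$ throughout), is the only place where genuine attention is required; everything else is a mechanical recycling of the two characterization lemmas.
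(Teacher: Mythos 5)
Your proposal is correct and takes essentially the same route as the paper's own proof: both directions rest on the observation that polynomial honesty together with the polynomial boundedness of a log-space size parameter collapses the bounds $(m(x)|x|)^{O(1)}$ and $O(\log(m(x)|x|))$ into $|x|^{O(1)}$ and $O(\log|x|)$, after which Lemma~\ref{paraLOGCFL-character} and Lemma~\ref{character-LOG-CFL} (the paper inlines the latter in the if-part via Sudborough's characterization and Lemma~\ref{reduction-advice-LOGCFL}) shuttle membership between the parameterized and plain classes, with $m_{bin}(x)=|x|$ supplying the forward direction. Your explicit handling of the $m(x)=0$ corner case is a minor refinement of a point the paper disposes of by appealing to honesty.
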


To prove this proposition, we utilize Lemmas \ref{reduction-advice-LOGCFL}, \ref{character-LOG-CFL}, and \ref{paraLOGCFL-character} as well as Sudborough's important characterizations of $\logcfl$ and $\logdcfl$ in terms of polynomial-time aux-2npda's and aux-2dpda's working with only logarithmic  space \cite{Sud78}.

\begin{proofof}{Proposition \ref{para-LOGCFL}}
(If -- part) Assume that $\logcfl/\poly \subseteq \logdcfl/\poly$. Take an arbitrary parameterized decision problem $(L,m)$ from  $\para\logcfl/\poly \cap \phsp$.
Since $(L,m)\in \phsp$, there exists a constant $k\geq1$ satisfying  $m(x)\leq |x|^k+k$ for all $x$.

By Lemma \ref{paraLOGCFL-character}, there exist an advice function $h$ and an advised-aux-2npda $N$ such that $L=L(N)[h]$ and $N$ runs in  time $(m(x)|x|)^{O(1)}\subseteq |x|^{O(1)}$ and space $O(\log{(m(x)|x|)})\subseteq O(\log|x|)$ with the help of $h$ whose length $|h(|x|)|$ is upper-bounded
by $(m(x)|x|)^{O(1)}\subseteq |x|^{O(1)}$.
Instead of giving a fixed advice string $h(|x|)$, we provide an arbitrary  string to the lower track of an single input tape of $N$ and then obtain the non-advised language $L(N)$. Write $K$ for this language $L(N)$ for simplicity.

Using Sudborough's characterization of $\logcfl$ in terms of logarithmic-space aux-2npda's \cite{Sud78}, $K$ belongs to $\logcfl$.
Since $L=K[h]$, we obtain $L \in\logcfl/\poly$ by Lemma \ref{reduction-advice-LOGCFL}. Our assumption then derives $L\in\logdcfl/\poly$. Since $m$ is polynomially honest, we conclude that $(L,m)\in\para\logdcfl/\poly$.

(Only If -- part) We start with the assumption of  $\para\logcfl/\poly\cap\phsp \subseteq \para\logdcfl/\poly$. Take any language $L$ in $\logcfl/\poly$ and set $m_{bin}(x)=|x|$ for all $x$. Notice  that $m_{bin}$ is polynomially honest. Our goal is to derive the conclusion of $L\in\logdcfl/\poly$.

By Lemma \ref{character-LOG-CFL}, there exist a polynomially-bounded function $h$ and a polynomial-time, logarithmic-space  advised-aux-2npda $N$ satisfying $L=L(N)[h]$.
Consider the parameterized decision problem $(L,m_{bin})$. Since $N$ solves $L$ with the help of $h$ in time $(m_{bin}(x)|x|)^{O(1)}$ using space $O(\log{(m_{bin}(x)|x|)})$,
$(L,m_{bin})$ belongs to $\para\logdcfl/\poly\cap \phsp$.
Our assumption thus implies that $(L,m_{bin})\in \para\logdcfl/\poly$.
Since $m_{bin}(x)=|x|$, Lemmas \ref{character-LOG-CFL} and \ref{paraLOGCFL-character} for $\logdcfl/\poly$ and $\para\logdcfl/\poly$ then conclude that  $L\in\logdcfl/\poly$.
\end{proofof}

\section{A Road to the LOGCFL$\subseteq$LOGDCFL/poly Question}\label{sec:proof-twoway}

In Sections \ref{sec:preparation}--\ref{sec:define-LOGCFL}, we have already discussed fundamental properties necessary to prove the following main theorem of this work: two characterizations of the $\logcfl\subseteq \logdcfl/\poly$ question in terms of nonuniform stack-state complexity through the parameterizations of $\logcfl/\poly$ and $\logdcfl/\poly$.

\begin{theorem}\label{character-twoway}
The following three statements are logically equivalent.
\renewcommand{\labelitemi}{$\circ$}
\begin{enumerate}\vs{-2}
  \setlength{\topsep}{-2mm}%
  \setlength{\itemsep}{1mm}%
  \setlength{\parskip}{0cm}%

\item $\mathrm{2NPD}/\poly \subseteq \mathrm{2DPD}$.

\item $\para\logcfl/\poly \cap \phps \subseteq \para\logdcfl/ \poly$.

\item $\logcfl \subseteq \logdcfl /\poly$.
\end{enumerate}
\end{theorem}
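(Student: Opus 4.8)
The plan is to establish the chain of implications $(2)\Rightarrow(3)\Rightarrow(1)\Rightarrow(2)$, thereby closing the loop among the three statements. The equivalence $(2)\Leftrightarrow(3)$ is essentially already handled by Proposition \ref{para-LOGCFL}, modulo the distinction between $\phsp$ and $\phps$; the genuinely new content will be the link between the nonuniform stack-state class statement $(1)$ and the parameterized collapse $(2)$. I expect the core of the proof to be a pair of simulations translating families of polynomial-stack-state two-way pushdown automata into advised-aux-2npda's (and aux-2dpda's) running in parameterized time and space, and conversely translating parameterized advised machines back into nonuniform families with polynomial ceilings.

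First I would treat $(3)\Rightarrow(2)$: assuming $\logcfl\subseteq\logdcfl/\poly$, Lemma \ref{LOGCFL-remove} upgrades this to $\logcfl/\poly\subseteq\logdcfl/\poly$, and then Proposition \ref{para-LOGCFL} yields $\para\logcfl/\poly\cap\phsp\subseteq\para\logdcfl/\poly$. Since the restriction imposed by $\phps$ is a strengthening of (or coincides with) the polynomially-honest constraint captured by $\phsp$, statement $(2)$ follows. Conversely, for $(2)\Rightarrow(3)$, I would take $L\in\logcfl$, attach the trivial size parameter $m_{bin}(x)=|x|$ (which is polynomially honest and log-space computable), observe via Lemma \ref{character-LOG-CFL} and Lemma \ref{paraLOGCFL-character} that $(L,m_{bin})\in\para\logcfl/\poly\cap\phps$, invoke the hypothesis to land in $\para\logdcfl/\poly$, and then read back $L\in\logdcfl/\poly$ using the same two characterization lemmas for the deterministic side. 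This mirrors the ``Only If'' argument of Proposition \ref{para-LOGCFL} closely.

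The substantive work is $(1)\Rightarrow(2)$ and $(2)\Rightarrow(1)$. For $(2)\Rightarrow(1)$, given a family $\LL\in\twonpd/\poly$ solved by 2npda's $\{M_n\}$ of polynomial stack-state complexity with a polynomial ceiling $r$, I would encode the family as a single parameterized problem: the size parameter $m$ recovers the index $n$ from a valid instance (using the polynomial ceiling to make $m$ log-space computable and polynomially honest), and the advice string $h(|x|)$ encodes the transition table of the relevant $M_n$, whose description has size $(m(x)|x|)^{O(1)}$ because the stack-state complexity is polynomial. A single generic advised-aux-2npda then reads the encoded $M_n$ off its advice track and simulates it on $x$, placing the pushdown contents on its own stack and using only $O(\log(m(x)|x|))$ auxiliary space to track head positions and counters; Lemma \ref{paraLOGCFL-character} certifies membership in $\para\logcfl/\poly\cap\phps$, and the hypothesis $(2)$ delivers a $\para\logdcfl/\poly$ machine, which I then unfold back into a uniform-per-length family of 2dpda's of polynomial stack-state complexity, giving $\LL\in\twodpd$. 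For $(1)\Rightarrow(2)$ I would run this correspondence in reverse: a parameterized problem in $\para\logcfl/\poly\cap\phps$ is, by Lemma \ref{paraLOGCFL-character}, solved by an advised-aux-2npda in parameterized time and space, which I slice by input length into a nonuniform family of 2npda's whose stack-state complexity is polynomial (the stack symbol series and states being bounded by the polynomial space/time budget) and whose valid instances respect a polynomial ceiling inherited from polynomial honesty; applying $(1)$ converts this family to a 2dpda family, which reassembles into an advised-aux-2dpda and hence into $\para\logdcfl/\poly$.

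The main obstacle I anticipate is the back-and-forth between the \emph{nonuniform} family formalism and the \emph{advice} formalism without losing the polynomial stack-state bound. Two delicate points demand care: converting a polynomial-stack-state pushdown automaton into an auxiliary pushdown automaton that genuinely uses only logarithmic auxiliary space (the push size $e_n$ and the simultaneously-pushable series $\Gamma^{\leq e_n}$ must be simulated on the stack itself, not on the work tape), and ensuring that the index-recovery size parameter $m$ is both log-space computable and polynomially honest, which is exactly where the polynomial-ceiling condition defining $\twonpd/\poly$ is indispensable. Handling the determinism constraint symmetrically on the $\dcfl$ side—so that the deterministic simulation preserves the deterministic requirement of the pushdown transition function—will require checking that the head-position bookkeeping introduces no nondeterministic branching, as already noted in the proofs of Lemmas \ref{CFL-robust} and \ref{paraLOGCFL-character}.
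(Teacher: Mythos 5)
Your overall architecture is essentially the paper's, inlined: the equivalence $(2)\Leftrightarrow(3)$ is obtained exactly as you describe from Proposition \ref{para-LOGCFL} together with Lemma \ref{LOGCFL-remove}, and your two simulations (folding the auxiliary work-tape content and head position into inner states in one direction; writing encoded transition tables on the advice tape and running a generic advised-aux-2npda interpreter in the other) are precisely the content of Proposition \ref{key-prop}, which the paper proves separately and then invokes. However, your direction $(2)\Rightarrow(1)$ has a genuine gap at the step where you ``encode the family as a single parameterized problem'' with a size parameter $m$ that ``recovers the index $n$ from a valid instance.'' The index $n$ is in general \emph{not} recoverable from an instance $x$: the formalism explicitly allows $L^{(+)}_n\cap L^{(+)}_{n'}\neq\setempty$ and, worse, the same string may lie in $L^{(+)}_n$ and in $L^{(-)}_{n'}$ for distinct $n,n'$, so no function of $x$ alone can simultaneously determine the slice and classify $x$ correctly. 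The paper circumvents this by prepending the index explicitly, working with padded instances $1^n\# x$ so that $m(1^n\# x)=n$. Moreover, your parenthetical claim that the polynomial ceiling makes $m$ log-space computable is a non sequitur: the ceiling yields only polynomial honesty. Log-space computability of $m$, and the fact that the induced $K$ is a genuine \emph{language} (the simulator must also correctly handle every non-promise string $1^n\# x$ with $x\notin L^{(+)}_n\cup L^{(-)}_n$), both require first passing to an $\dl$-good \emph{extension} of $\LL$ inside $\twonpd/\poly$, as in Example \ref{L-good} (modify each $M_n$ to halt on all inputs, totalizing the promise). Without this extension step, which your proposal omits entirely, the parameterized problem you construct is ill-defined and Lemma \ref{paraLOGCFL-character} cannot be applied.

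Two smaller points. Since advice in $\para\fl/\poly$ is indexed by input length and $|1^n\# x|=n+|x|+1$ does not determine $n$, the advice string $h(\ell)$ must contain the concatenation $\alpha_0\#\alpha_1\#\cdots\#\alpha_{r(\ell)}$ of the encoded tables of \emph{all} potentially relevant machines, not ``the transition table of the relevant $M_n$''; the paper does exactly this in the converse direction of Proposition \ref{key-prop}(2). On the positive side, your $(1)\Rightarrow(2)$ direction, which slices the advised-aux-2npda by input length rather than by the value of the size parameter (the paper's induced family uses $\Sigma^{(n)}=\{x\mid m(x)=n\}$), does go through: polynomial honesty together with the polynomial boundedness of log-space size parameters makes $m(x)$ and $|x|$ polynomially related, so the per-length slices have polynomial stack-state complexity and trivial polynomial ceilings, and hardwiring the fixed advice string into inner states is legitimate there since no promise ambiguity arises when $L$ is a language. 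So the forward direction is a sound mild variant of Proposition \ref{key-prop}(1), but the backward direction needs the index-prefix and $\dl$-good-extension machinery to be repaired.
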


The proof of the above theorem is quite involved with the proper use of the  parameterizations of $\logcfl/\poly$ and $\logdcfl/\poly$  discussed in Section \ref{sec:define-LOGCFL}.
In Section \ref{character-twoway}, we will give the formal proof of Theorem \ref{character-twoway}, which requires a supporting statement (Proposition \ref{key-prop}) verified specifically in Section \ref{sec:proof-proposition}.

\subsection{How to Verify Theorem \ref{character-twoway}}

Our goal of the rest of this section is to prove Theorem \ref{character-twoway}. To accomplish this goal, another key statement, Proposition \ref{key-prop}, is needed. To describe the proposition, nonetheless, we need to explain important terminology from \cite{Yam19a}.


Given a promise decision problem $(L,m)$ over alphabet $\Sigma$, we define $L^{(+)}_n=L\cap \Sigma^{(n)}$ and $L^{(-)}_n = \overline{L}\cap\Sigma^{(n)}$, where $\Sigma^{(n)}$ denotes $\{x\in\Sigma^*\mid m(x)=n\}$.
This special notation $\Sigma^{(n)}$ is intentionally used here to  differentiate it from $\Sigma^n$ ($=\{x\in\Sigma^*\mid |x|=n\}$). We further define $\LL=\{(L^{(+)}_n,L^{(-)}_n)\}_{n\in\nat}$. This family $\LL$ is said to be \emph{induced from} $(L,m)$.

On the contrary, given two families $\LL=\{(L^{(+)}_n,L^{(-)}_n)\}_{n\in\nat}$ and $\hat{\LL}=\{(\hat{L}^{(+)}_n,\hat{L}^{(-)}_n)\}_{n\in\nat}$ of promise decision problems, $\hat{\LL}$ is called an \emph{extension} of $\LL$ if $L^{(+)}_n\subseteq \hat{L}^{(+)}_n$ and $L^{(-)}_n\subseteq \hat{L}^{(-)}_n$ for all $n\in\nat$. When the set $\{1^n\# x\mid n\in\nat, x\in L^{(+)}_n\cup L^{(-)}_n\}$ belongs to $\dl$, $\LL$ is said to be \emph{$\dl$-good}. Finally, a collection $\FF$ of families of promise problems is \emph{$\dl$-good} if every element in $\FF$ has an $\dl$-good extension in $\FF$.
Given an $\dl$-good family $\LL=\{(L^{(+)}_n,L^{(-)}_n)\}_{n\in\nat}$ of promise decision problems over alphabet $\Sigma$, we define $K^{(+)}_n=\{1^n\# x\mid x\in L^{(+)}_n\}$ and $K^{(-)}_n = \{1^n\# x\mid x\notin L^{(+)}_n\} \cup \{z\# x\mid z\in \Sigma^n-\{1^n\},x\in (\Sigma_{\#})^*\}\cup \Sigma^n$, where $\Sigma_{\#} = \Sigma\cup\{\#\}$. We further define $K=\bigcup_{n\in\nat} K^{(+)}_n$, from which $\overline{K}=\bigcup_{n\in\nat}K^{(-)}_n$
immediately follows.  We then define $m: (\Sigma_{\#})^*\to \nat$ by setting $m(w)=n$ if $w=1^n\# x$ for a certain string $x\in L^{(+)}_n\cup L^{(-)}_n$, and $m(w)=|w|$ otherwise. The obtained parameterized problem $(K,m)$ is said to be \emph{induced from} $\LL$.

\begin{example}\label{L-good}
As a concrete example, we intend to demonstrate that $\twonpd$ is $\dl$-good. Take an arbitrary family $\LL=\{(L_n^{(+)},L_n^{(-)})\}_{n\in\nat}$ of promise problems over alphabet $\Sigma$ in $\twonpd$. Consider a family $\MM=\{M_n\}_{n\in\nat}$ of polynomial-size 2npda's that solves $\LL$. Let $\Sigma^{(n)}= L_n^{(+)}\cup L_n^{(-)}$ for any $n\in\nat$.
Although $M_n$ is not even required to halt on inputs outside of $\Sigma^{(n)}$, we can easily modify $M_n$ to halt on all of its computation paths on any input.
We then define $\hat{L}_n^{(+)} = \{x\in\Sigma^*\mid \text{ $M_n$ accepts $x$ }\}$ and $\hat{L}_n^{(-)} = \{x\in\Sigma^*\mid \text{ $M_n$ rejects $x$ }\}$. We then set $\hat{\LL} =\{(\hat{L}_n^{(+)},\hat{L}_n^{(-)})\}_{n\in\nat}$. Since $\hat{L}_n^{(+)}\cup \hat{L}_n^{(-)} = \Sigma^*$, $L_n^{(+)}\subseteq \hat{L}_n^{(+)}$, and  $L_n^{(-)}\subseteq \hat{L}_n^{(-)}$, it follows that $\hat{\LL}$ is an extension of $\LL$. We then set $A=\{1^n\# x\mid n\in\nat, x\in \hat{L}_n^{(+)}\cup \hat{L}_n^{(-)}\}$, which equals $\{1^n\# x\mid n\in\nat, x\in\Sigma^*\}$. Clearly, this set $A$ is recognized deterministically using only logarithmic space, and thus $A$ belongs to $\dl$.
\end{example}


The following statement gives the final piece of our argument by further bridging between stack-state complexity classes and their associated parameterized classes.

\begin{proposition}\label{key-prop}
Let $L$ and $K$ be any two languages and let $m$ be any log-space size parameter. Let $\LL=\{(L^{(+)}_n,L^{(-)}_n)\}_{n\in\nat}$ be any family of promise decision problems.
\renewcommand{\labelitemi}{$\circ$}
\begin{enumerate}\vs{-1}
  \setlength{\topsep}{-2mm}%
  \setlength{\itemsep}{1mm}%
  \setlength{\parskip}{0cm}%

\item If $\LL$ is induced from $(L,m)$, then $(L,m)\in\para\logcfl/\poly\cap \phsp$ iff $\LL\in\twonpd/\poly$.

\item If $\LL$ is $\dl$-good and $(K,m)$ is induced from $\LL$, $(K,m)\in\para\logcfl/\poly\cap \phsp$ iff $\LL\in \twonpd/\poly$.
\end{enumerate}\vs{-2}
The same statements hold for the pair of $\logdcfl/\poly$ and $\mathrm{2DPD}/\poly$.
\end{proposition}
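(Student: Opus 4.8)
The plan is to treat Lemma~\ref{paraLOGCFL-character} as the principal bridge: it recasts membership in $\para\logcfl/\poly$ purely as the existence of an advised-aux-2npda running in time $(m(x)|x|)^{O(1)}$ and space $O(\log(m(x)|x|))$ together with a polynomially-bounded advice function, and this is exactly the object that translates, in a nonuniform fashion, into a family of polynomial stack-state 2npda's. I would establish statement (1) first and then obtain statement (2) by the same constructions composed with the $1^n\#$ prefix encoding. The deterministic case (advised-aux-2dpda's and $\twodpd/\poly$) requires no new idea, since every construction below preserves determinism; I would simply remark that the deterministic versions of Lemma~\ref{paraLOGCFL-character} and of Sudborough's characterization \cite{Sud78} are invoked in place of the nondeterministic ones.

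For statement (1), in the forward direction I start from an advised-aux-2npda $M$ and advice $h$ with $L=L(M)[h]$ supplied by Lemma~\ref{paraLOGCFL-character}. Since $(L,m)\in\phsp$, every $x$ with $m(x)=n$ satisfies $|x|\le r(n)$ for some polynomial $r$, which already yields the polynomial ceiling of the induced family $\LL$. The central observation is that, for a fixed index $n$, everything $M$ needs beyond the physical input $x$ has size polynomial in $n$: its work tape occupies only $O(\log n)$ cells, so its $n^{O(1)}$ configurations (contents plus head position) fold into the inner-state set; and only the advice strings $h(0),\dots,h(r(n))$ are ever consulted, each of polynomial length. I therefore define $M_n$ to carry $M$'s work-tape configuration and a counter for the logical input-head position inside its states, to mirror $M$'s stack on its own stack, and to read off the appropriate symbol of the hardwired advice whenever the logical head leaves the region occupied by $x$. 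The total number of inner states plus pushable stack-symbol series stays polynomial, so $\LL\in\twonpd/\poly$. Conversely, given a family $\{M_n\}$ of polynomial stack-state 2npda's with polynomial ceiling $r$, poly-honesty of $m$ is immediate (from $m(x)=n\Rightarrow|x|\le r(n)$), giving $(L,m)\in\phsp$; and I build an advised-aux-2npda that computes $n=m(x)$ in logarithmic space, recovers the description of $M_n$ from the advice, and simulates $M_n$ on $x$, encoding each of the polynomially many $M_n$-stack symbols as an $O(\log n)$-bit block pushed over several steps so that the simulator keeps a fixed stack alphabet, polynomial time, and logarithmic work space. The advice $h(\ell)$ is set to the table of descriptions of all $M_n$ with $n\le\poly(\ell)$ (finitely many, by polynomial boundedness of $m$), which has polynomial length and is indexed by the computed value $n$. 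Lemma~\ref{paraLOGCFL-character} then places $(L,m)$ in $\para\logcfl/\poly$.

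For statement (2) I exploit the $1^n\#$ prefix. In the forward direction, from the advised-aux-2npda and advice for $(K,m)$ I define $M_n$ to simulate it on $1^n\#x$ with the prefix $1^n\#$ hardwired, its symbols produced from the state-tracked logical position while the physical head is used only inside the $x$-region; then $M_n$ accepts $x$ exactly when $1^n\#x\in K$, i.e.\ when $x\in L^{(+)}_n$, so $\{M_n\}$ solves $\LL$, and poly-honesty of $(K,m)$ supplies the ceiling. In the backward direction I construct an advised-aux-2npda for $K$ that first tests, in logarithmic space, whether its input has the valid form $1^n\#x$ with $x\in L^{(+)}_n\cup L^{(-)}_n$; here the hypothesis that $\LL$ is $\dl$-good is precisely what guarantees this membership test lies in $\dl$. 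On a malformed or out-of-promise input it rejects (correct, since such strings lie outside $K$), and on a valid input it extracts $n$, simulates $M_n$ on $x$ as in part (1), and accepts iff $M_n$ accepts. Poly-honesty of $m$ again follows from the polynomial ceiling (valid $w=1^n\#x$ has $|w|=n+1+|x|\le n+1+r(n)=\poly(m(w))$, and all other $w$ satisfy $m(w)=|w|$).

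The step I expect to be the main obstacle is the forward simulation in part~(1): performing a faithful two-way simulation of an aux-2npda by a plain 2npda while simultaneously (a) keeping the stack-state complexity polynomial, and (b) correctly emulating excursions of the logical head into the hardwired advice region (and, in part~(2), into the $1^n\#$ prefix) where the physical input head cannot follow. The bookkeeping that decouples the logical head position from the physical head position, together with the verification that folding the $O(\log n)$-space work tape into the inner states incurs only a polynomial blow-up, is where the construction must be carried out with care.
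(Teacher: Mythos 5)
Your proposal is correct and takes essentially the same route as the paper's own proof: Lemma~\ref{paraLOGCFL-character} serves as the bridge in both, the forward directions fold the $O(\log(m(x)|x|))$-space work tape (content plus head position) together with the hardwired advice into the inner states of $M_n$, and the converse directions re-encode inner states and stack symbols as $O(\log n)$-bit blocks over a fixed stack alphabet while supplying the transition tables through the advice, exactly as in the paper's construction of $M'_n$ and its advice string $h(l)=\alpha_0\#\alpha_1\#\cdots\#\alpha_{r(l)}$. You are in fact slightly more explicit than the paper on two points it glosses over, namely that the advice must pack the descriptions of all machines $M_n$ with $n\leq \poly(\ell)$ because it is indexed by input length rather than by the value $m(x)$, and that the $\dl$-goodness hypothesis is precisely what makes the rejection of out-of-promise inputs in part~(2) implementable in logarithmic space.
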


Proposition \ref{key-prop} is the most challenging statement to verify in this work.
Theorem \ref{character-twoway} is easily obtained from
this proposition as follows with an additional help of Lemmas \ref{LOGCFL-remove} and \ref{paraLOGCFL-character} and Proposition \ref{para-LOGCFL}.
Recall that Proposition \ref{para-LOGCFL} connects both $\logcfl/\poly$ and $\logdcfl/\poly$ to their natural parameterizations.

\begin{proofof}{Theorem \ref{character-twoway}}
(2 $\Leftrightarrow$ 3) This is a direct consequence of Proposition \ref{para-LOGCFL} and Lemma \ref{LOGCFL-remove}.

(1 $\Rightarrow$ 2)
Assuming that $\twonpd/\poly \subseteq \twodpd$, we wish to verify that $\para\logcfl\poly\cap \phsp \subseteq \para\logdcfl /\poly$.
Take any promise decision problem $(L,m)$ in $\para\logcfl/\poly\cap\phsp$.
Consider the family $\LL$ induced from $(L,m)$.
Proposition \ref{key-prop}(1) implies that $\LL\in\twonpd/\poly$.
By our assumption, it follows that $\LL$ is in $\twodpd/\poly$. Again applying Proposition \ref{key-prop}(1) for $\logdcfl$ and $\twodpd$, we conclude that  $(L,m)\in\para\logdcfl/\poly$.

(2 $\Rightarrow$ 1)
Conversely, we assume that $\logcfl\subseteq \logdcfl/\poly$.
Given any family $\LL$ in $\twonpd/\poly$, take an  $\dl$-good extension $\LL'$ of $\LL$ in $\twonpd/\poly$ since $\twonpd$ is $\dl$-good by Example \ref{L-good}.
Let $(K,m)$ denote the parameterized decision problem induced from $\LL'$.
By Proposition \ref{key-prop}(2), we obtain $(K,m)\in\para\logcfl/\poly \cap\phsp$. Our assumption then implies that $(K,m)\in\para\logdcfl/\poly\cap \phsp$. Proposition \ref{key-prop}(2) for $\logdcfl$ and $\twodpd$ leads to the conclusion that $\LL\in\twodpd/\poly\subseteq \twodpd$.
\end{proofof}

Theorem \ref{character-twoway} motivates us to further study the power and limitation of $\twodpd$ and $\twonpd$ in connection to the $\logcfl\subseteq \logdcfl/\poly$ question.
The proof of Proposition \ref{key-prop} will be given in the subsequent subsection.

\subsection{How to Prove Proposition \ref{key-prop}}\label{sec:proof-proposition}

Through this subsection, we intend to prove Proposition \ref{key-prop}. In the following argument, $L$ and $K$ denote two languages, $m$ denotes a log-space size parameter, and  $\LL=\{(L^{(+)}_n,L^{(-)}_n)\}_{n\in\nat}$ denotes a family of promise decision problems.
Our proof proceeds as follows.

(1) Assume that  $\LL=\{(L^{(+)}_n,L^{(-)}_n)\}_{n\in\nat}$ is induced from $(L,m)$. This implies that $L^{(+)}_n=L\cap\Sigma^{(n)}$ and $L^{(-)}_n=\overline{L}\cap\Sigma^{(n)}$ for any index $n\in\nat$, where $\Sigma^{(n)} =\{x\in\Sigma^*\mid m(x)=n\}$.

Let us begin our argument with assuming that $(L,m)\in\para\logcfl/\poly\cap\phsp$.
We want to derive the conclusion of $\LL\in\twonpd/\poly$.
Take a fixed polynomial $p$  satisfying $m(x)\leq p(|x|)$ for all $x$. Notice that $\Sigma^{(m(x))}\subseteq \Sigma^{\leq p(|x|)}$ for any $x$.
By Lemma \ref{paraLOGCFL-character}, we take an advice function $h$ and an advised-aux-2npda $N = (Q,\Sigma,{\{\vdash,\dashv\}},\Theta,\Gamma,\delta,q_0,\bot,Q_{acc},Q_{rej})$  solving $(L,m)$ in time $(m(x)|x|)^{O(1)}\subseteq |x|^{O(1)}$ using space $O(\log{(m(x)|x|)})\subseteq O(\log|x|)$ with advice strings $h(|x|)$ of size $(m(x)|x|)^{O(1)}$, where $\Theta$ is an alphabet for an auxiliary work tape. Let $e$ denote the push size of $N$.
We wish to define a family $\MM=\{M_n\}_{n\in\nat}$ of polynomial-size 2npda's for $\LL$ since this concludes that $\LL\in\twonpd$.

Fix $n\in\nat$ arbitrarily and let us define the desired machine $M_n$. For convenience, we prepare a pair $(l,w)$, which expresses a content $w$ of an auxiliary work tape and a location $l$ of its tape head.
The $M_n$'s inner states are of the form $(q,l,w)$ for $q\in Q$, $l\in[0,k\log{n}]_{\integer}$, and $w\in\Gamma^{k\log{n}}$ for a sufficiently large constant $k\geq1$.
On input $x\in\Sigma^{(n)}$ and advice string $z=h(|x|)$, $M_n$ tries to simulate $N$ on $(x,z)$ in such a way that, whenever $N$ changes $q$ to $q'$ and $(l,w)$ to $(l',w')$ in a single step, $M_n$ changes its inner state from $(q,l,w)$ to $(q',l',w')$ accordingly. Since $|Q\times [0,k\log{n}]_{\integer}\times \Gamma^{k\log{n}}| = n^{O(1)}$, $M_n$ has polynomially many inner states together with a constant size of $\Gamma^{\leq e}$, and thus $\MM$ has polynomial stack-state complexity.
Therefore, $\LL$ belongs to $\twonpd$. Since $\Sigma^{(n)} = \Sigma^{(m(x))} \subseteq \Sigma^{\leq p(|x|)}$,
we further conclude that $\LL\in\twonpd/\poly$.

Conversely, we assume that $\LL\in\twonpd/\poly$. Our goal is to obtain $(L,m)\in\para\logcfl/\poly\cap\phsp$.
Let us take a family $\MM=\{M_n\}_{n\in\nat}$ of polynomial-size 2npda's solving $\LL$. For each index $n\in\nat$, assume that $M_n$ has the form $(Q_n,\Sigma,{\{\vdash,\dashv\}},\Gamma_n, \delta_n,\bot,q_0,Q_{n,acc},Q_{n,rej})$ with push size $e_n$.
Note that there exist two polynomials $p$ and $s$ such that $L_n^{(+)}\cup L_n^{(-)} \subseteq \Sigma^{\leq s(n)}$ and $|Q_n|\leq p(n)$ for all $n\in\nat$. This yields  $|x|\leq s(m(x))$ for all $x$ since $\Sigma^{(n)} = L_n^{(+)}\cup L_n^{(-)}$.
As a result, $(L,m)\in\phsp$ follows.
Next, we wish to simulate all 2npda's $M_n$ using an appropriate  advised-aux-2npda, say, $N$ and an appropriate advice function $h$ to ensure that $(L,m)$ is indeed in $\para\logcfl/\poly$.

Since each $M_n$ may use completely different sets $Q_n$ and $\Gamma_n$, we first need to ``identify'' $Q_n$ and $\Gamma_n$ as sets of strings; i.e., $Q_n=\{0,1\}^{k(n)}$ and $\Gamma_n =\{0,1\}^{l(n)}$ for polynomials $k(n)$ and $l(n)$, where $q_0=0^{k(n)}$ and $\bot=0^{l(n)}$.
This helps us switch $\Gamma_n$ to a new stack alphabet $\Gamma=\{0,1,\#,\bot\}$ of fixed size. Take another polynomial $r$ satisfying  $2^{l(n)(e_n+1)}\leq r(n)$ for all $n\in\nat$.

We modify $M_n$ to design a new machine $M'_n$ as follows. For readability, we intentionally use multiple auxiliary work tapes for $M'_n$ because it is possible to combine those tapes into a single one with extra $O(\log{|x|})$ bits to remember the locations of work-tape heads.
As for an inner state $q\in Q_n$, we translate it into the corresponding $k(n)$-bit string $w(q)\in \{0,1\}^{k(n)}$. We use the first auxiliary work tape of $O(\log{n})$ space to store $q$ in the form of $w(q)$.
Next, we translate a stack content $\bot a_1\cdots a_m$ into $\bot\#w(a_1)\#w(a_2)\cdots \#w(a_m)$, where each symbol $a_i$ is translated  into its corresponding $w(a_i)$ of $l(n)$ symbols together with the designated separator $\#$.
One step of the simulation of $M_n$ goes as follows. We scan a stack to remove a topmost block of symbols $w(a)$ into the 2nd work tape.
We scan both the 1st and the 2nd work tapes to recover $(q,a)$ and apply a transition of $\delta_n$.
Whenever $M_n$ replaces symbol $a$ by $z=b_1b_2\cdots b_{m'} \in (\Gamma_n)^{\leq e_n}$ in a single step, we change it into $\# w(b_1)\# w(b_2)\cdots \# w(b_{m'})$.
If the 2nd work tape does not become empty, then we push a series of newly written symbols back into the stack starting with $\#$.
This newly obtained machine $M'_n$ uses a transition function of the form $\delta'_n: (\{0,1\}^{k(n)}-Q_{halt,n}) \times \check{\Sigma}_{\lambda}\times \Gamma\times \Theta_1\times \Theta_2\to \PP(\{0,1\}^{k(n)}\times \Gamma^{\leq \ceilings{\log{r(n)}}}\times \Theta_1\times\Theta_2)$, where $\Theta_1$ and $\Theta_2$ are appropriate alphabets used for the 1st and the 2nd auxiliary work tapes.

Furthermore, we translate $\delta'_n$ into a \emph{transition table}, in which each row is indexed by $(q,\sigma,a,\tau_1,\tau_2)\in \{0,1\}^{k(n)}\times \check{\Sigma}_{\lambda}\times\Gamma\times \Theta_1\times \Theta_2$ and each column enumerates all values in $\delta'_n(q,\sigma,a,\tau_1,\tau_2)$ in a predetermined order. We take an appropriate encoding of this transition table into a single string $h(n)$ of length proportional to $k(n)+r(n) = n^{O(1)}$ so that we can recover necessary transitions
at any time simply by scanning this advice string $h(n)$ from left to right by a two-way tape head.
This provides an advised-aux-2npda {}
that simulates $M'_{m(x)}$ on input $x$ in time $(m(x)|x|)^{O(1)}$ and space $O(\log{(m(x)|x|)})$ with the help of $h$. In the end, we obtain $(L,m)\in\para\logcfl/\poly$.

(2) We start with an $\dl$-good  family $\LL=\{(L^{(+)}_n,L^{(-)}_n)\}_{n\in\nat}$ of promise decision problems and $(K,m)$ denotes the parameterized decision problem with a log-space size parameter $m$, which is induced from $\LL$.

Assuming that $(K,m)\in\para\logcfl/\poly\cap \phsp$, we aim at showing that $\LL\in\twonpd/\poly$.
Lemma \ref{paraLOGCFL-character} implies the existence of an advice function $h$ and an advised-aux-2npda $N$ that together solve $(L,m)$ in time $(m(x)|x|)^{O(1)}$ and space $O(\log(m(x)|x|))$ using advice strings $h(|x|)$ of size $(m(x)|x|)^{O(1)}$.
Moreover, we assume that an auxiliary work tape of $N$ has space at most $k\log{(m(x)|x|)}+k$ for a fixed constant $k>0$ on all inputs $x$. Since $(K,m)$ is induced from $\LL$, it follows that $m(w)=n$ if $w=1^n\# x$ and $x\in L^{(+)}_n\cup L^{(-)}_n$, and $m(w)=|w|$ otherwise.
Since $m$ is polynomially honest, we take a polynomial $p$ satisfying $|x|\leq p(m(x))$ for all $x$. Therefore, if $x\in L_n^{(+)}\cup L_n^{(-)}$, then $|1^n\# x|\leq p'(n)$ holds, where $p'(n) = n+p(n)+1$.

Let us define a family $\MM=\{M_n\}_{n\in\nat}$ of 2npda's for $\LL$ as follows. To cope with the auxiliary work tape of $N$, we treat a pair $(l,w)$ of a tape content $w$ and a tape head location $l$ as a part of extra inner states of $M_n$.
Each advice string $h(|w|)$ is embedded into the inside of $M_n$ as part of inner states.
Since $(K,m)\in\phsp$, $x\in L^{(+)}_n\cup L^{(-)}_n$ implies $|x|\leq p(m(x)) = p(n)$. On input $x$, $M_n$ reads $x$ and simulates $N$ on $(w, h(|w|))$ with $w=1^n\# x$.
This is possible because $h(|1^n\# x|)$ consists of at most $p(n)$ symbols if $x\in L_n^{(+)}\cup L_n^{(-)}$. If $N$ changes $(l,w)$ at any step, then $M_n$ changes its associated inner states. This shows that $\MM$ solves $\LL$ correctly. Thus, $\LL\in\twonpd$ follows.
Since $|1^n\# x|\leq p'(n)$ for all $x\in L_n^{(+)}\cup L_n^{(-)}$, we further conclude that  $\LL\in\twonpd/\poly$.

To show the converse, we assume that $\LL\in\twonpd/\poly$. There is a polynomial $t$ such that, for any $n$ and $x$, $x\in L^{(+)}_n\cup L^{(-)}_n$ implies $|x|\leq t(n)$.  By the definition of $m$, we obtain $|x|\leq t(m(1^n\# x))$. Thus, $m$ is polynomially honest.
Take a family $\MM=\{M_n\}_{n\in\nat}$ of polynomial-size 2npda's solving $\LL$.
Our goal is to show that $(K,m)\in\para\logcfl/\poly$ by constructing an appropriate advised-aux-2npda, say, $N$.
In a way similar to (1), for each index $i\in\nat$,
we encode all transitions of $M_i$ into a single string, say, $\alpha_i$.
Let $r(l)= \max\{m(w)\mid |w|=l\}$. We then define an advice string $h(l) = \alpha_0\# \alpha_1\# \cdots \# \alpha_{r(l)}$. The desired machine $N$ works as follows. On input $(w,h(|w|))$ with $w=1^n\# x$, if $|x|\leq t(n)$, then $N$ simulates $M_n$ on $x$ by following the transitions encoded into $\alpha_n$ using $O(\log|w|)$-space auxiliary work tape to keep track of the changes of inner states of $M_n$; otherwise, $N$ rejects the input. Clearly, $N$ solves $(K,m)$ with the help of $h$. Therefore, $(K,m)$ belongs to $\para\logcfl/\poly$.

\s
This completes the proof of the proposition.

\section{Case of One-Way Models}\label{sec:proof-oneway}

We have discussed the two nonuniform stack-state complexity classes $\twonpd$ and $\twodpd$ in Section \ref{sec:proof-twoway}.
Unfortunately, it still remains open whether or not $\twonpd$ coincides with $\twodpd$ and $\twonpd/\poly$ is included in $\twodpd$. However, when we move away from the two-way model of polynomial-size families of pushdown automata
to the \emph{one-way model} (that is, 1npda's and 1dpda's), it is actually possible to prove the clear difference between determinism and nondeterminism.

We begin with formally defining $\onedpd$ and $\onenpd$. The class $\onenpd$ is composed of all families of promise problems $\LL=\{(L^{(+)}_n,L^{(-)}_n)\}_{n\in\nat}$ such that there exist families of 1npda's $\MM=\{M_n\}_{n\in\nat}$ for which each $M_n$ solves $(L^{(+)}_n,L^{(-)}_n)$ for any $n\in\nat$. The deterministic counterpart of $\onenpd$ is denoted by $\onedpd$.
With these notations, we claim that $\onedpd$ and $\onenpd$ are indeed different. In the following two propositions, we actually show that $\onen$ and $\onedpd$ are \emph{incomparable}. Since $\oned\subseteq \onedpd$ and $\onen\subseteq \onenpd$, this incomparability leads to the desired separation $\onedpd\neq \onenpd$.

\begin{proposition}\label{oneDPD-vs-oneN}
$\onedpd\nsubseteq \onen$.
\end{proposition}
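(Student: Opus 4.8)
The plan is to exhibit one family of promise problems that a family of constant-size 1dpda's solves easily but that defeats every polynomial-size family of 1nfa's. The separation exploits exactly the feature a pushdown store provides and a one-way finite memory lacks: the ability to compare a string against its reversal. Concretely, I would fix the alphabet $\Sigma=\{0,1,\#\}$ and, for each $n\in\nat$, define
\[
L_n^{(+)}=\{w\#w^{R}\mid w\in\{0,1\}^n\},\qquad L_n^{(-)}=\{u\#v^{R}\mid u,v\in\{0,1\}^n,\ u\neq v\},
\]
where $w^{R}$ is the reversal of $w$, and set $\LL=\{(L_n^{(+)},L_n^{(-)})\}_{n\in\nat}$. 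These two sets are disjoint and every promised string has length $2n+1$, so $\LL$ is a legitimate family of promise decision problems with a polynomial ceiling.

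For the upper bound ($\LL\in\onedpd$), I would describe a single 1dpda $M$ that serves as $M_n$ for every $n$: while reading the prefix up to $\#$ it pushes each input bit onto the stack, and upon reading $\#$ it switches to a comparison phase in which each subsequent input symbol is matched against the popped topmost stack symbol. Since pushing $u=u_1\cdots u_n$ leaves $u_n$ on top, reading $v^{R}=v_n\cdots v_1$ compares $v_i$ with $u_i$ for every $i$, and $M$ accepts precisely when all comparisons succeed and the stack has returned to $\bot$ at the right endmarker. This $M$ is deterministic, uses $O(1)$ inner states, stack alphabet $\{0,1,\bot\}$, and constant push size, so its stack-state complexity $|Q|+|\Gamma^{\leq e}|$ is constant and in particular polynomially bounded in $n$; hence $\LL\in\onedpd$.

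For the lower bound ($\LL\notin\onen$), I would argue by a fooling-set (splicing) argument. Suppose toward a contradiction that a family $\{N_n\}_{n\in\nat}$ of 1nfa's solves $\LL$ with $N_n$ having polynomially many states. Fix $n$ and, for each $w\in\{0,1\}^n$, note that $w\#w^{R}\in L_n^{(+)}$ must be accepted; fix one accepting computation of $N_n$ on it and let $q_w$ be the state occupied immediately after the prefix $w\#$ is read. I would then show the $q_w$ are pairwise distinct: if $q_w=q_{w'}$ with $w\neq w'$, splicing the prefix of the accepting run on $w\#w^{R}$ up to $q_w$ with the suffix of the accepting run on $w'\#w'^{R}$ from $q_{w'}=q_w$ produces an accepting computation on $w\#w'^{R}$; but $w\neq w'$ places $w\#w'^{R}$ in $L_n^{(-)}$, which $N_n$ must reject, a contradiction. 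Therefore $N_n$ has at least $2^n$ inner states, which no polynomial bounds, so $\LL\notin\onen$. Combining the two bounds yields $\onedpd\nsubseteq\onen$.

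The only genuinely delicate point is the splicing step: I must confirm that the swapped string $w\#w'^{R}$ is itself a \emph{promised} negative instance (it is, since $w,w'\in\{0,1\}^n$ and $w\neq w'$), so that the hypothesis that $N_n$ solves the promise problem actually forces its rejection; everything else (the determinism of $M$, the disjointness and length bounds of the two sets, and the finiteness of the fooling set of size $2^n$) is routine. I would stress that the reversal in $L_n^{(-)}$ is essential: the non-reversed copy language $\{u\#v\mid u=v\}$ is not even context-free, whereas $w\#w^{R}$ is exactly the pattern a deterministic stack recognizes with constant resources.
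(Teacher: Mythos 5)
Your proposal is correct and follows essentially the same route as the paper's proof: the same witness family of marked even-length palindromes $\{w\#w^{R}\mid w\in\{0,1\}^n\}$, the same constant-resource 1dpda for the upper bound, and the same crossing-state fooling-set argument (fixing an accepting run, recording the state at the $\#$ boundary, and splicing two runs with equal boundary states) for the lower bound against polynomial-size 1nfa families. The only cosmetic differences are that you restrict the negative promise set to $\{u\#v^{R}\mid u\neq v\}$ rather than all of $\Sigma^{2n+1}$ minus the positives, and you index the fooling set by exact-length strings $w\in\{0,1\}^n$ rather than $\{0,1\}^{\leq n}$; neither affects the argument.
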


\begin{proof}
We define ${\cal PAL}=\{(Pal^{(+)}_n,Pal^{(-)}_n)\}_{n\in\nat}$ over the ternary alphabet $\Sigma=\{0,1,\#\}$ by setting $Pal^{(+)}_n=\{w\# w^R\mid w\in\{0,1\}^n\}$ (even-length marked palindromes) and $Pal^{(-)}_n = \Sigma^{2n+1}- Pal^{(+)}_n$.
In what follows, we claim that  (1) ${\cal PAL}\in\onedpd$ and (2) ${\cal PAL}\notin \onen$.

(1) Fix $n\in\nat$ arbitrarily. Assuming $x=u\# v$ for two strings $u,v\in\{0,1\}^*$, we first push $u$ into a stack with the help of the separator $\#$ and then check that $v$ matches $u^R$ by popping $u$ from the stack in the reverse order. At the same time, we check that  $|u|= n$ by using polynomially many inner states. It is possible to check whether $x$ is of the form $u\# v$ for $u,v\in\{0,1\}^*$ without using any stack. Thus, the entire procedure can be implemented on an appropriate 1dpda. As a consequence, ${\cal PAL}$ belongs to  $\onedpd$.

(2) The following argument loosely follows \cite{Kap12} by way of contradiction. Assume that there is a family $\MM=\{M_n\}_{n\in\nat}$ of 1nfa's having polynomial state complexity that solves ${\cal PAL}$. Let $p$ denote an appropriate polynomial such that $M_n$ has at most $p(n)$ inner states for any $n\in\nat$. Fix $n$ arbitrarily and consider the $\{0,1\}^{\leq n}\times \{0,1\}^{\leq n}$ matrix $A$ whose indices $(u,v)$ are of the form with $u,v\in\{0,1\}^{\leq n}$ and their entries are $1$ if $u\# v\in Pal^{(+)}_n$, and $0$ otherwise. This matrix $A$ is clearly a diagonal matrix (according to an appropriate index ordering). For each index $u\in\{0,1\}^{\leq n}$, we fix an accepting computation path of $M_n$ on $u\# u$ and denote by $q_{u}$ an inner state that $M_n$ takes while $M_n$'s tape head is crossing the boundary between $u\#$ and $u$.

Since there are $2^{n+1}$ diagonal entries in $A$ and $M_n$ has at most $p(n)$ inner states, there must be two distinct indices  $u_1,u_2\in\{0,1\}^{\leq n}$ for which $q_{u_1}=q_{u_2}$. Let us consider the input string $u_1\# u_2^R$. Since $q_{u_1}=q_{u_2}$, $M_n$ must accepts $u_1\# u_2^R$, implying $u_1=u_2$. However, this contradictions $u_1\neq u_2$. Therefore, ${\cal PAL}$ is outside of $\onen$.
\end{proof}

\begin{proposition}\label{separation-oneway}
$\onen\nsubseteq \onedpd$.
\end{proposition}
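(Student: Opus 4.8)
The plan is to exhibit a family that is the natural ``dual'' of the marked palindromes ${\cal PAL}$ used in Proposition \ref{oneDPD-vs-oneN}: whereas palindromes encode the \emph{reverse}-copy relation $v=u^R$, which a LIFO stack handles effortlessly but a 1nfa cannot, I would separate in the opposite direction using the \emph{forward}-copy relation. Over $\Sigma=\{0,1,\#\}$, define ${\cal NEQ}=\{(N^{(+)}_n,N^{(-)}_n)\}_{n\in\nat}$ by $N^{(+)}_n=\{u\# v\mid u,v\in\{0,1\}^n,\ u\neq v\}$ and $N^{(-)}_n=\{u\# u\mid u\in\{0,1\}^n\}$. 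The two claims to establish are (1) ${\cal NEQ}\in\onen$ and (2) ${\cal NEQ}\notin\onedpd$. The guiding intuition is that nondeterministic finite automata certify \emph{inequality} cheaply by guessing a single mismatching coordinate, while a deterministic pushdown automaton, scanning $v$ once from left to right, can only recover the symbols of $u$ from the stack in top-to-bottom (hence \emph{reverse}) order, so it is forced to check $v=u^R$ rather than $v=u$ unless it hoards $u$ inside its finite control.

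For the upper bound (1), I would build, for each $n$, a 1nfa $M_n$ that nondeterministically guesses a position $i\in[n]$ at the start, reads $u$ while counting up to $i$, records the single bit $u_{(i)}$, waits for $\#$, then reads $v$ counting up to $i$ again and accepts iff $v_{(i)}\neq u_{(i)}$. The state set is essentially $\{(\text{phase},\text{counter},\text{bit})\}$ with the counter ranging over $[n]$, so $M_n$ uses $O(n)$ states and $\MM=\{M_n\}_{n}$ has polynomial state complexity; by the very definition of $N^{(+)}_n,N^{(-)}_n$, $M_n$ accepts every $u\# v$ with $u\neq v$ and rejects every $u\# u$, giving ${\cal NEQ}\in\onen$. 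This part is routine.

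The hard direction is (2), and I would argue by contradiction. Suppose $\{M_n\}_n$ is a family of 1dpda's of polynomial stack-state complexity solving ${\cal NEQ}$. After a normalization step reducing $M_n$ to a quasi-real-time form (bounding the number of consecutive $\lambda$-moves so that the stack height after reading a length-$L$ prefix is $O(L)$), let $C_u=(q_u,\gamma_u)$ be the configuration reached after reading $\vdash\! u\#$. A first, easy fooling argument shows that $u\mapsto C_u$ is injective on $\{0,1\}^n$: if $C_u=C_{u'}$ for $u\neq u'$, then feeding the common suffix $u\dashv$ runs an \emph{identical} computation from both, yet $u\# u$ must be rejected while $u'\# u$ must be accepted, a contradiction. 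This yields $2^n$ distinct boundary configurations. The genuine obstacle, which I expect to be the crux of the whole proof, is that this alone does \emph{not} bound the stack-state complexity, since a machine can legitimately realize $2^n$ distinct configurations merely by pushing $u$ onto a two-symbol stack; likewise a one-way communication bound fails, because $\gamma_u$ is exactly large enough to transmit $u$. The lower bound must therefore be a \emph{computational}, LIFO-specific obstruction rather than an information-theoretic one.

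Accordingly, the core of the argument would exploit the finitely-many ``pop behaviors'' (surface/terminator relations) of a 1dpda: with $s=s(n)$ the stack-state complexity, the phase-two computation on $v$ starting from $\gamma_u=a_1\cdots a_h$ decomposes into a sequence of at most $h$ surface transductions, each determined by a pair $(\text{state},\text{top symbol})$ drawn from a set of size $O(s^2)$, which consume $v$ left-to-right while exposing the symbols of $u$ strictly from the top downward. I would use this to show that, to distinguish $u\# u$ from every $u\# v$ that differs from $u$ in an \emph{early} coordinate, the value of that early coordinate of $u$ must already be committed to the finite control (of size $O(\log s)$ bits) before the stack has popped down to where it is stored; selecting a subfamily of $u$'s that disagree only on a $\Theta(n)$-length prefix then forces $\log s=\Omega(n)$, i.e.\ $s=2^{\Omega(n)}$, contradicting polynomiality. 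This ``alignment-impossibility'' step is the quantitative, fixed-length incarnation of the fact that the forward-copy complement $\{u\# v\mid |u|=|v|,\ u\neq v\}$ is context-free but not in $\dcfl$ (since $\dcfl$ is closed under complement and intersection with regular sets, whereas $\{u\# u\}$ is not even context-free). Justifying the quasi-real-time normalization and making the surface-transduction counting fully rigorous are the two technical points I would expect to occupy most of the work; once (2) is in hand, combining it with $\onen\subseteq\onenpd$, $\oned\subseteq\onedpd$, and Proposition \ref{oneDPD-vs-oneN} delivers the incomparability and hence $\onedpd\neq\onenpd$.
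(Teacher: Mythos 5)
Your part (1) is fine, but part (2) — the entire content of the proposition — is not proved, and the gap sits exactly where you flag it. The step claiming that ``the value of that early coordinate of $u$ must already be committed to the finite control before the stack has popped down to where it is stored'' is itself the lemma that needs proof, and nothing in the surface-transduction setup delivers it: a 1dpda may push arbitrary recodings of the prefix it has read (blocks, overlapping windows, tagged pairs), so there is no a priori correspondence between stack depth and input position, and the decomposition into $O(s^2)$ pop behaviors does not by itself force $\log s=\Omega(n)$ bits into the control for inputs differing in an early coordinate. Note also that the uniform analogy you invoke (marked inequality is in $\cfl$ but not $\dcfl$, via closure of $\dcfl$ under complementation) cannot be transferred to this setting: each promise slice $(N^{(+)}_n,N^{(-)}_n)$ is finite, hence trivially solvable by some individual 1dpda, so closure-property arguments say nothing here, and only a counting argument against the whole polynomial-size family can work. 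What you actually establish is the injectivity of $u\mapsto C_u$, which, as you yourself observe, is insufficient; the ``alignment-impossibility'' lemma remains a promissory note, and the quasi-real-time normalization you defer is a second unproved ingredient.

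The paper closes exactly this gap with a cut-and-paste (swapping) argument, which is the missing idea and would also repair your variant. It works with ${\cal DUP}$, where $Dup^{(+)}_n=\{ww\mid w\in\{0,1\}^n\}$, first observing $\co\onedpd=\onedpd$ (exchange accepting and rejecting states), so it suffices to put ${\cal DUP}$ in $\co\onen\setminus\onedpd$. For the lower bound it factors each accepting computation on $ww$ at a stack level: a tuple $(x,q_1,y,q_2,z)$ with $xyz=ww$ such that the run passes $(q_1,yz\dashv,\gamma)\vdash^{*}(q_2,z\dashv,\gamma)$ with the stack never dipping below $|\gamma|$ while reading $y$. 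The number of types $(q_1,q_2,|x|,|y|)$ is polynomial in $n$, while $2^n$ strings $ww$ must be accounted for; hence two distinct inputs share a type, and swapping their middle infixes yields accepted strings outside $Dup^{(+)}_n$ --- a contradiction requiring no claim about where information resides in the machine. If you wish to keep ${\cal NEQ}$, the same swap applied to the $2^n$ rejecting computations on the strings $u\#u$ works: two strings $u_1\#u_1$ and $u_2\#u_2$ sharing a type give a rejected string $x_1y_2z_1$ of the form $u\#v$ with $u\neq v$ (the case where the swapped infix straddles $\#$ needs a separate line), contradicting that it lies in $N^{(+)}_n$. Until some such swap-and-count lemma is supplied, your proposal is an accurate diagnosis of why the problem is hard rather than a proof.
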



\begin{yproof}
It is easy to verify that $\co\onedpd = \onedpd$ by exchanging between  accepting states and rejecting states of underlying 1dpda's. Thus, it suffices to show that $\co\onen\nsubseteq \onedpd$.

Our example family of promise problems is ${\cal DUP}  =\{(Dup^{(+)}_{n},Dup^{(-)}_{n})\}_{n\in\nat}$ (duplication) over the binary alphabet $\Sigma=\{0,1\}$, where
$Dup^{(+)}_{n}=\{w w\mid w\in \Sigma^{n}\}$
and $Dup^{(-)}_{n} = \Sigma^{2n} - Dup^{(+)}_{n}$.
To separate $\co\onen$ from $\onedpd$, we wish to show that (1) ${\cal DUP}$ belongs to $\co\onen$ and (2) ${\cal DUP}$ is not in $\onedpd$.

(1) Given each promise problem  $(Dup^{(+)}_n,Dup^{(-)}_n)$, we design a  machine $M_n$ as follows. On input $x$ of the form $uv$ with $u,v\in\{0,1\}^n$, \emph{universally} (i.e., co-nondeterministically)  choose indices $i\in[|u|]$ and check that $u_{(i)}=v_{(i)}$. Splitting $x$ into $u$ and $v$ is possible by $M_n$ using polynomially many inner states. This machine universally accepts the input $x$ exactly when $x$ is in $Dup^{(+)}_{n}$. Therefore, ${\cal DUP}\in\co\onen$ follows.

(2) We show this assertion by way of contradiction. Assume that ${\cal DUP}$ is in $\onedpd$ and take a family $\MM=\{M_n\}_{n\in\nat}$ of 1dpda's  that solves ${\cal DUP}$ with a set $Q_n$ of inner states. Furthermore, we choose a polynomial $p$ satisfying $|Q_n|\leq p(n)$ for any $n\in\nat$. It is possible to assume that, without loss of generality, that $M_n$ empties its stack at the end of computation and that there is only one accepting state, say, $q_{acc}$.

Let us consider configurations of each machine $M_n$.  We define the set $A_n$ to be $\{(x,q_1,y,q_2,z)\mid xyz\in Dup^{(+)}_n, \exists \gamma [(q_0,{\vdash{xyz}\dashv},\bot)\vdash^* (q_1,{yz\dashv},\gamma) \vdash^* (q_2,{z\dashv},\gamma) \vdash^* (q_{acc},\lambda,\bot)] \}$, provided that $M_n$'s stack height does not go below $|\gamma|$ while reading $y$. Given $(q_1,q_2,\ell_1,\ell_2)$ with $q_1,q_2\in Q_n$ and $\ell_1,\ell_2\in[0,n+1]_{\integer}$, we set $B_{q_1,q_2,\ell_1,\ell_2}
=\{(x,q_1,y,q_2,z) \in A_n \mid |x|=\ell_1, |y|=\ell_2\}$.
We claim that (*) there exists a quadruple $(q_1,q_2,\ell_1,\ell_2)$ satisfying $|B_{q_1,q_2,\ell_1,\ell_2}|\geq 2$. If this is true, then we take such a quadruple $(q_1,q_2,\ell_1,\ell_2)$ and two distinct elements $(x_1,q_1,y_1,q_2,z_1)$ and $(x_2,q_1,y_2,q_2,z_2)$ from $B_{q_1,q_2,\ell_1,\ell_2}$. By the definition, $M_n$ must accept the inputs $x_1y_1z_1$ and $x_2y_2z_2$. Because of $B_{q_1,q_2,\ell_1,\ell_2}$, $x_1y_2z_1$ and $x_2y_1z_2$ are also accepted by $M_n$, implying $x_1y_2z_1,x_2y_1z_2\in Dup_n^{(+)}$. However, this is impossible.
As a consequence, we obtain ${\cal DUP}\notin\onedpd$.

We still need to prove the pending claim (*). Assume otherwise; namely, $|B_{q_1,q_2,\ell_1,\ell_2}|\leq 1$ for any quadruple $(q_1,q_2,\ell_1,\ell_2)$. We define a partial function $g$ by setting $g(q_1,q_2,\ell_1,\ell_2) = (x,y,z)$ if $(x,q_1,y,q_2,z)\in B_{q_1,q_2,\ell_1,\ell_2}$, and $g(q_1,q_2,\ell_1,\ell_2)$ is undefined otherwise. Note that the size of the domain $dom(g)$ is at most $|Q_n|^2(n+1)^2$. This implies that $|\bigcup_{q_1,q_2,\ell_1,\ell_2}B_{q_1,q_2,\ell_1,\ell_2}|\leq p(n)^2(n+1)^2< 2^n$ for a sufficiently large $n$. This contradicts the fact that $|\bigcup_{q_1,q_2,\ell_1,\ell_2}B_{q_1,q_2,\ell_1,\ell_2}|\geq 2^n$  because $g$ must produce all triplets $(x,y,z)$ satisfying $xyz\in Dup^{(+)}_n$.
\end{yproof}

\section{A Brief Discussion and Future Directions}\label{sec:discussion}

Throughout this work, we have expanded to pushdown automata the scope of ``nonuniform state complexity'' classes based on finite automata, such as $\twod$ and $\twon$, which were initiated in the 1970s by Berman and Lingas \cite{BL77} and Sakoda and Sipser \cite{SS78}.
We have introduced the notion of nonuniform stack-state complexity and have defined two important complexity classes $\twodpd$ and $\twonpd$ using 1dpda's and 2npda's, respectively.

As a main theorem, we have established an exact relationship between the $\logcfl\subseteq \logdcfl/\poly$ question and the $\twonpd/\poly\subseteq \twodpd$ question by way of introducing a reasonable ``parameterization'' of $\logcfl$ and $\logdcfl/\poly$. This relationship ensures the importance of the study of nonuniform  stack-state complexity classes in automata theory. Using the one-way machine models of 1dpda's and 1npda's, we have defined two more natural complexity classes $\onedpd$ and $\onenpd$. Unlike the case of two-way models, we have shown that $\onedpd\neq \onenpd$.

For the sake of the avid reader, we intend to raise a few open problems associated with the results of this work.

\renewcommand{\labelitemi}{$\circ$}
\begin{enumerate}\vs{-2}
  \setlength{\topsep}{-2mm}%
  \setlength{\itemsep}{1mm}%
  \setlength{\parskip}{0cm}%

\item The most important question left open in this work is the $\twonpd/\poly \subseteq \twodpd$ question. At this moment, we speculate that $\twonpd/\poly$ is not included in $\twodpd$, implying $\logcfl\nsubseteq \logdcfl/\poly$ by Theorem \ref{character-twoway}, but there seems no apparent evidence that strongly supports our speculation. Therefore, it would be desirable to find such an evidence to ensure the correctness of our speculation. Another relevant open question is whether or not $\twodpd\neq\twonpd$ holds.

\item Numerous nonuniform state complexity classes based on the one-way machine models are discussed in \cite{Yam19a} but not all relationships among these classes are determined. We wonder how  $\onedpd$ and $\onenpd$ fit into the landscape of all such complexity classes. Can we prove clear separations of $\onedpd$ and $\onenpd$ from those complexity classes?

\item Lately, the language families, denoted by $\mathrm{LOG}k\mathrm{SDA}$ for all indices $k\geq2$, based on deterministic depth-$k$ storage automata were  introduced in \cite{Yam21} to extend $\logdcfl$. It is proven in \cite{Yam21} that $\logdcfl\subseteq \mathrm{LOG}k\mathrm{SDA} \subseteq \mathrm{SC}^k$ for any $k\geq2$, where $\mathrm{SC}^k$ is the $k$th Steve's class. Is it possible to expand our main result (Theorem \ref{character-twoway}) to these intriguing classes?

\item We have defined $\logcfl/\poly$ as $\mathrm{LOG}/\poly(\cfl)$ in Section \ref{sec:advice-extension}. However, we have left open the question of whether $\mathrm{LOG}(\cfl/\poly)$ coincides with $\logcfl/\poly$. If this is not the case, what is the precise complexity of  $\mathrm{LOG}(\cfl/\poly)$?

\item We still lack for a general theory of both nonuniform state complexity classes and nonuniform stack-state complexity classes. It is imperative to develop such a theory for the immediate benefit of promoting the basic understanding of the behaviors of automata in general.
\end{enumerate}

\let\oldbibliography\thebibliography
\renewcommand{\thebibliography}[1]{%
  \oldbibliography{#1}%
  \setlength{\itemsep}{-1pt}%
}
\bibliographystyle{alpha}

\end{document}